\newif\ifthreevalue
\let\emptyset\varnothing
\newtheorem{theorem}{Theorem}
\newtheorem{remark}{Remark}
\newtheorem{example}{Example}
\newtheorem{lemma}{Lemma}
\newtheorem{definition}{Definition}
\newcommand{\blue}[1]{\textcolor{blue}{#1}}
\newcommand{\bC}{\mathbb{C}}
\newcommand{\bF}{\mathbb{F}}
\newcommand{\bN}{\mathbb{N}}
\newcommand{\bR}{\mathbb{R}}
\newcommand{\cA}{\mathcal{A}}
\newcommand{\cF}{\mathcal{F}}
\newcommand{\cS}{\mathcal{S}}
\newcommand{\cU}{\mathcal{U}}
\newcommand{\boldb}{\mathbf{b}}
\newcommand{\boldd}{\mathbf{d}}
\newcommand{\bolde}{\mathbf{e}}
\newcommand{\boldq}{\mathbf{q}}
\newcommand{\boldu}{\mathbf{u}}
\newcommand{\boldv}{\mathbf{v}}
\newcommand{\boldw}{\mathbf{w}}
\newcommand{\boldx}{\mathbf{x}}
\newcommand{\boldy}{\mathbf{y}}
\newcommand{\boldz}{\mathbf{z}}
\newcommand{\boldU}{\mathbf{U}}
\newcommand{\boldW}{\mathbf{W}}
\DeclareMathOperator{\FtwoSpan}{\operatorname{Span}_{\bF_2}}
\DeclareMathOperator{\RSpan}{\operatorname{Span}_{\bR}}
\DeclareMathOperator{\diag}{\operatorname{diag}}
\DeclareMathOperator*{\addsets}{\scalerel*{+}{\sum}}
\DeclareSymbolFont{bbold}{U}{bbold}{m}{n}
\DeclareSymbolFontAlphabet{\mathbbold}{bbold}
\newcommand{\1}{\mathbbold{1}}
\author{\textbf{Zirui (Ken) Deng}, \textbf{Vinayak Ramkumar}, \textbf{Rawad Bitar}, and \textbf{Netanel Raviv}
 \thanks{This paper was presented in part at the IEEE International Symposium on Information Theory (ISIT), 2022. 
 
 Zirui Deng and Netanel Raviv are with the Department of Computer Science and Engineering at Washington University in St. Louis, MO, USA. Emails: d.ken@wustl.edu, netanel.raviv@wustl.edu.
 
 Vinayak Ramkumar is jointly with the Department of Electrical Engineering-Systems at Tel Aviv University, Israel and the Department of Computer Science and Engineering at Washington University in St. Louis, MO, USA. Email: vinram93@gmail.com. 
 
 Rawad Bitar is with the Institute for Communications Engineering at Technical University of Munich, Germany. Email: rawad.bitar@tum.de. 
 
  %The work of Vinayak Ramkumar is supported by the European Research Council (ERC grant number 852953).
 }}
\begin{document}

\title{Private Inference in Quantized Models}% \vspace{-0.2cm}

\maketitle

\pagestyle{plain}
\begin{abstract}
    % In many machine learning scenarios, a typical setup involves a server that holds a model and a user that possesses data. The challenge is to perform inference while safeguarding the privacy of both parties. \textit{Private Inference} (PI) has been extensively explored in recent years, mainly from a cryptographic standpoint via techniques like homomorphic encryption and multiparty computation. These approaches often come with high computational overhead and can degrade the accuracy of the model. In our work, we take a different approach inspired by the \textit{Private Information Retrieval} (PIR) literature. We view PI as the task of retrieving inner products of parameter vectors with the data, a fundamental operation in many machine learning models. We introduce schemes that enable such retrieval of inner products for models with weights restricted to a finite set. In addition, our schemes uncover a fundamental inherent tradeoff between user and server privacy. Our information-theoretic approach is applicable to a wide range of problems and robust in privacy guarantees for both the user and the server. 
    A typical setup in many machine learning scenarios involves a server that holds a model and a user that possesses data, and the challenge is to perform inference while safeguarding the privacy of both parties. \textit{Private Inference} has been extensively explored in recent years, mainly from a cryptographic standpoint via techniques like homomorphic encryption and multiparty computation. These approaches often come with high computational overhead and may degrade the accuracy of the model. In our work, we take a different approach inspired by the \textit{Private Information Retrieval} literature. We view private inference as the task of retrieving inner products of parameter vectors with the data, a fundamental operation in many machine learning models. We introduce schemes that enable such retrieval of inner products for models with~\textit{quantized}  (i.e., restricted to a finite set) weights; such models are extensively used in practice due to a wide range of benefits. In addition, our schemes uncover a fundamental tradeoff between user and server privacy. Our information-theoretic approach is applicable to a wide range of problems and robust in privacy guarantees for both the user and the server. 
\end{abstract}

\thispagestyle{plain}

\begin{IEEEkeywords}
Information-theoretic privacy, private inference, private computation, private information retrieval.
\end{IEEEkeywords}

\section{Introduction}
Loss of privacy in the information age raises various concerns ranging from financial security issues to mental health problems. At the heart of this lies a dilemma for individual users: they must decide between relinquishing their personal data to service providers and forfeiting the capacity to engage in fundamental day-to-day tasks. On the other hand, training machine learning models is a laborious task which requires ample knowledge and resources. As such, the models themselves are the intellectual property of their owners, and should be kept private as well.

Information privacy becomes particularly important in the context of the \textit{inference} phase within the machine learning pipeline. In this scenario, a server (e.g., a service provider) possesses an already-trained model (e.g., a neural network, a logistic/linear regression model, a
linear classifier, etc.) and offers a user access to this model in exchange for a fee. To facilitate the inference process, the user engages in an information exchange with the server, allowing the model to make predictions or perform computations on their input data. A \textit{private inference} (PI) protocol is one that ensures a certain level of privacy for both parties involved in this exchange. 

PI has been the subject of extensive research in recent years, primarily from a cryptographic perspective through primitives like homomorphic encryption \cite{gilad2016cryptonets, hesamifard2017cryptodl,reagen2021cheetah} and multiparty computing \cite{mohassel2017secureml, riazi2019XONN}. However, each of these techniques has its limitations, including high computational overhead for non-polynomial functions, degraded accuracy, the need for multiple communication rounds, among others; refer to \cite{boemer2020mp2ml} for a comprehensive overview. In this paper, we take a perpendicular direction and develop an information-theoretic
approach to private inference. Among the benefits of the latter over the former are resilience against computationally unbounded adversaries, clear privacy guarantees and simplicity of computation.

Our approach begins with the observation that the computation involved in most machine learning models starts with the extraction of one or more \textit{signals} from the data, i.e., values of the form $\boldw\boldx^\intercal$, where $\boldw$ represents a weight vector associated with the model and $\boldx$ represents a vector of features, both of them usually real-valued. This is the case in linear classification (sign($\boldw\boldx^\intercal$)), linear regression ($\boldw\boldx^\intercal$), logistic regression ($1/(1+\exp(-\boldw\boldx^\intercal))$), and many others. In order to provide clear information-theoretic guarantees, especially from the server's perspective, we leverage the notion of \textit{quantized} models. It has been demonstrated that by constraining the weight values of $\boldw$ to $\{\pm 1\}$, for instance, one can obtain significant gains in terms of simplicity of implementation \cite{mcdanel2017embedded}, as well as resilience against adversarial perturbations \cite{galloway2018attacking, raviv2020codnnrobust, raviv2021enhancing}, while keeping the models efficiently trainable \cite{hubara2016binarized, teerapittayanon2017distributed}. From the user's standpoint, we treat the data distribution as continuous, and our approach offers privacy guarantees with respect to the dimension in which the server obtains knowledge about the data. See Section~\ref{section:justificationforell} for a justification of this user-privacy metric based on the Infomax principle~\cite{linsker1988self}.

The method we follow in this paper draws inspiration from the vast \textit{private information retrieval} (PIR) literature \cite{chor1998pir,yekhanin2010pir,beimel2005general,kushilevitz1997replication,ostrovsky2007survey}. In PIR, a user wishes to retrieve an entry from a distributed dataset while keeping the identity of that entry private from potentially colluding servers (note the inverted role of “user” and “server” here with respect to our work). Retrieving an entry $x_i$ from a dataset $\boldx = (x_1, \ldots, x_n)$ while keeping the index $i$ private can be regarded as retrieving $\boldx\bolde_i^\intercal$, where $\bolde_i$ is the $i$-th unit vector that must remain private. A natural generalization is retrieving $\boldx\boldw^\intercal$, rather than $\boldx\bolde_i^\intercal$, for some weight vector $\boldw$ that must remain private, bringing us to the focus of this work. In this generalization—often referred to as~\textit{private computation} \cite{raviv2020privatepolynomial, sun2018privatecomputation}—the inner product $\boldw\boldx^\intercal$ is computed over a finite field, unlike private inference. Another discernible difference is that PIR almost exclusively discusses one user and multiple servers, among which collusion is restricted in some way. This is usually not the case in private inference settings, as both the
server and the user are cohesive entities.

This paper is structured as follows. Preliminaries and general problem setup are given in Section \ref{section:preandsetup}. Our protocols for the PI problem where the weights are restricted to $\{\pm 1\}$ are given in Section~\ref{section:pmone}. 
% \red{In Appendix \ref{section:rootsofunity}, we give a protocol for the case where the weights are restricted to roots of unity, from which a protocol for weights restricted to $\{0, \pm 1\}$ is derived [moved to the appendix I suppose? NR]}. 
In Section \ref{section:jointretrieval}, we seek to address the more general problem where the server holds multiple $\{\pm 1\}$-vectors and aims to retrieve multiple inner products with the data vector at the same time; this is motivated by models which require such preliminary computation, such as neural networks. In Section \ref{section:perfectandhard}, we consider the case where the weight vector takes values from certain finite sets of interest and show an intriguing connection to an additive combinatorial notion called \textit{coefficient complexity}; this notion was recently studied for applications in distributed computations. In Section~\ref{section:optimality}, we derive a general tradeoff between the privacy of the server and that of the user, as well as a lower bound for the communication cost of the protocol. Then, we analyze the optimality of various protocols presented in earlier sections in terms of the privacy tradeoff and the bound on communication cost. Finally, in Section \ref{section:conclusion}, we propose directions for possible future work.

\section{Preliminaries}\label{section:preandsetup}

\subsection{Problem statement}
%In this work, w
We consider PI problems in which  entries of weight vectors are restricted to a finite set. To be specific, a server holds a weight vector $\boldw \in A^n$ for some finite set $A$, where $\boldw$ is randomly chosen from a uniform distribution $W = \textrm{Unif}(A^n)$, and a user holds a data vector $\boldx \in \bR^n$, randomly chosen from an unknown continuous data distribution $X$. Modeling the weight vector as a uniform random variable reflects the user's lack of knowledge about the weights, whereas modeling the data as taken from an unknown distribution is a common practice in machine learning. The server aims to retrieve the inner product~$\boldw\boldx^\intercal$ (computed over $\bR$) while ensuring some degree of privacy for the server and the user. Privacy metrics for both parties will be defined shortly. 

Following the computation of $\boldw\boldx^\intercal$, the server feeds the result into some pre-trained model~$f$ to obtain the inference~$f(\boldw\boldx^\intercal)$, which is then sent back to the user. In what follows we focus on the retrieval of~$\boldw\boldx^\intercal$ and its associated privacy; the remaining part of the inference (such as inner layers in a neural network) remains perfectly private. Note that due to the disclosure of $f(\boldw\boldx^\intercal)$ to the user, some privacy loss is unavoidable, regardless of the privacy guarantees. For instance, the user may repeatedly query multiple vectors $\{\boldx_i\}_{i=1}^{N}$, collect
their inferences $\{f(\boldw\boldx_i^\intercal)\}_{i=1}^N$ by following the protocol $N$ times, and then train a model similar to $f$.

Inspired by the PIR literature, we focus on protocols of the query-answer form as follows:

\begin{enumerate}[label=\arabic*)]
    \item The server sends to the user a query vector $\boldq \sim Q$; the distribution $Q$ from which $\boldq$ is chosen is a deterministic function of $W$, i.e., $H(Q\vert W) = 0$, where $H$ is the entropy function.
    \item The user computes some $\ell$ vectors $\boldv_1, \ldots, \boldv_\ell \in \bR^n$ deterministically from $\boldq$, and sends an answer $\cA = \{\boldv_i\boldx^\intercal\}_{i=1}^{\ell }$ to the server.
    \item The server combines the elements of $\cA$ to retrieve the value $\boldw\boldx^\intercal$. 
\end{enumerate}

We note that the structure of the problem allows the server to send the same query $\boldq$ to multiple users without any loss of privacy. This way, the server may post $\boldq$ in some public forum (e.g., the server’s website) and save all future communication with users interested
in inference. These two interpretations of the problem are illustrated in Fig.~\ref{fig:system}. The upcoming information-theoretic analysis guarantees that $\boldq$ can be publicly available indefinitely, while the privacy of $\boldw$ remains protected against any computational power that exists or may exist in the future (up to the inevitable learning attack mentioned earlier). 

\tikzset{every picture/.style={line width=0.5pt}} %set default line width to 0.75pt        
\begin{figure}
\centering 
\begin{tikzpicture}[x=0.5pt,y=0.5pt,yscale=-1,xscale=1]
%uncomment if require: \path (0,757); %set diagram left start at 0, and has height of 757

%Shape: Square [id:dp13807884379592106] 
\draw   (28,71) -- (198,71) -- (198,241) -- (28,241) -- cycle ;
%Flowchart: Process [id:dp5305909846154844] 
\draw   (376,72) -- (616,72) -- (616,239) -- (376,239) -- cycle ;
%Straight Lines [id:da4241160389103362] 
\draw    (375,109) -- (201,108.01) ;
\draw [shift={(198,108)}, rotate = 0.32] [color={rgb, 255:red, 0; green, 0; blue, 0 }  ][line width=0.75]    (10.93,-3.29) .. controls (6.95,-1.4) and (3.31,-0.3) .. (0,0) .. controls (3.31,0.3) and (6.95,1.4) .. (10.93,3.29)   ;
%Straight Lines [id:da5670675311112674] 
\draw    (199,154) -- (372,154.99) ;
\draw [shift={(375,155)}, rotate = 180.33] [color={rgb, 255:red, 0; green, 0; blue, 0 }  ][line width=0.75]    (10.93,-3.29) .. controls (6.95,-1.4) and (3.31,-0.3) .. (0,0) .. controls (3.31,0.3) and (6.95,1.4) .. (10.93,3.29)   ;
%Straight Lines [id:da940916298171915] 
\draw    (375,201) -- (201,200.01) ;
\draw [shift={(198,200)}, rotate = 0.32] [color={rgb, 255:red, 0; green, 0; blue, 0 }  ][line width=0.75]    (10.93,-3.29) .. controls (6.95,-1.4) and (3.31,-0.3) .. (0,0) .. controls (3.31,0.3) and (6.95,1.4) .. (10.93,3.29)   ;

%Shape: Rectangle [id:dp0178015687324089] 
\draw   (27,346) -- (217,346) -- (217,409) -- (27,409) -- cycle ;
%Shape: Rectangle [id:dp026855023514770204] 
\draw   (375,441) -- (616,441) -- (616,732) -- (375,732) -- cycle ;
%Straight Lines [id:da012256767631477095] 
\draw    (407,440) -- (407,376) -- (219,376) ;
\draw [shift={(217,376)}, rotate = 360] [color={rgb, 255:red, 0; green, 0; blue, 0 }  ][line width=0.75]    (10.93,-3.29) .. controls (6.95,-1.4) and (3.31,-0.3) .. (0,0) .. controls (3.31,0.3) and (6.95,1.4) .. (10.93,3.29)   ;
%Shape: Rectangle [id:dp25090518864623523] 
\draw   (125,441) -- (249,441) -- (249,512) -- (125,512) -- cycle ;
%Shape: Rectangle [id:dp9855249020930197] 
\draw   (125,661) -- (249,661) -- (249,732) -- (125,732) -- cycle ;
%Shape: Rectangle [id:dp23907925756798232] 
\draw   (125,549) -- (249,549) -- (249,620) -- (125,620) -- cycle ;
%Straight Lines [id:da6916346292311721] 
\draw    (248,463) -- (373,463) ;
\draw [shift={(375,463)}, rotate = 180] [color={rgb, 255:red, 0; green, 0; blue, 0 }  ][line width=0.75]    (10.93,-3.29) .. controls (6.95,-1.4) and (3.31,-0.3) .. (0,0) .. controls (3.31,0.3) and (6.95,1.4) .. (10.93,3.29)   ;
%Straight Lines [id:da152994708453849] 
\draw    (374,490) -- (252,490) ;
\draw [shift={(250,490)}, rotate = 360] [color={rgb, 255:red, 0; green, 0; blue, 0 }  ][line width=0.75]    (10.93,-3.29) .. controls (6.95,-1.4) and (3.31,-0.3) .. (0,0) .. controls (3.31,0.3) and (6.95,1.4) .. (10.93,3.29)   ;
%Straight Lines [id:da07237947036465475] 
\draw    (374,598) -- (366,598) -- (356,598) -- (252,598) ;
\draw [shift={(250,598)}, rotate = 360] [color={rgb, 255:red, 0; green, 0; blue, 0 }  ][line width=0.75]    (10.93,-3.29) .. controls (6.95,-1.4) and (3.31,-0.3) .. (0,0) .. controls (3.31,0.3) and (6.95,1.4) .. (10.93,3.29)   ;
%Straight Lines [id:da9302498349917778] 
\draw    (249,571) -- (374,571) ;
\draw [shift={(376,571)}, rotate = 180] [color={rgb, 255:red, 0; green, 0; blue, 0 }  ][line width=0.75]    (10.93,-3.29) .. controls (6.95,-1.4) and (3.31,-0.3) .. (0,0) .. controls (3.31,0.3) and (6.95,1.4) .. (10.93,3.29)   ;
%Straight Lines [id:da7568623312770284] 
\draw    (248,683) -- (373,683) ;
\draw [shift={(375,683)}, rotate = 180] [color={rgb, 255:red, 0; green, 0; blue, 0 }  ][line width=0.75]    (10.93,-3.29) .. controls (6.95,-1.4) and (3.31,-0.3) .. (0,0) .. controls (3.31,0.3) and (6.95,1.4) .. (10.93,3.29)   ;
%Straight Lines [id:da9711842349296393] 
\draw    (374,709) -- (366,709) -- (356,709) -- (349,709) -- (252,709) ;
\draw [shift={(250,709)}, rotate = 360] [color={rgb, 255:red, 0; green, 0; blue, 0 }  ][line width=0.75]    (10.93,-3.29) .. controls (6.95,-1.4) and (3.31,-0.3) .. (0,0) .. controls (3.31,0.3) and (6.95,1.4) .. (10.93,3.29)   ;
%Straight Lines [id:da9235997321859235] 
\draw    (76,409) -- (76,480) -- (123,480) ;
\draw [shift={(125,480)}, rotate = 180] [color={rgb, 255:red, 0; green, 0; blue, 0 }  ][line width=0.75]    (10.93,-3.29) .. controls (6.95,-1.4) and (3.31,-0.3) .. (0,0) .. controls (3.31,0.3) and (6.95,1.4) .. (10.93,3.29)   ;
%Straight Lines [id:da824246585548281] 
\draw    (56,410) -- (55,582) -- (122,582) ;
\draw [shift={(124,582)}, rotate = 180] [color={rgb, 255:red, 0; green, 0; blue, 0 }  ][line width=0.75]    (10.93,-3.29) .. controls (6.95,-1.4) and (3.31,-0.3) .. (0,0) .. controls (3.31,0.3) and (6.95,1.4) .. (10.93,3.29)   ;
%Straight Lines [id:da8357227821674553] 
\draw    (37,408) -- (37,699) -- (122,698.02) ;
\draw [shift={(124,698)}, rotate = 179.34] [color={rgb, 255:red, 0; green, 0; blue, 0 }  ][line width=0.75]    (10.93,-3.29) .. controls (6.95,-1.4) and (3.31,-0.3) .. (0,0) .. controls (3.31,0.3) and (6.95,1.4) .. (10.93,3.29)   ;

% Text Node
\draw (60,130) node [anchor=north west][inner sep=0.75pt]   [align=center] {{\huge $\textbf{{\fontfamily{lmr}\selectfont x}}\displaystyle \sim {\fontfamily{lmr}\selectfont \textit{X}}$}};
% Text Node
\draw (440,104) node [anchor=north west][inner sep=0.75pt]   [align=left] {{\LARGE $\textbf{{\fontfamily{lmr}\selectfont w}}\displaystyle \sim {\fontfamily{lmr}\selectfont \textit{W}}$}};
% Text Node
\draw (394,169) node [anchor=north west][inner sep=0.75pt]   [align=left] {{\Large $\displaystyle \mathcal{A} \displaystyle \Rightarrow \textbf{{\fontfamily{lmr}\selectfont wx}} \displaystyle ^{\intercal } \displaystyle \Rightarrow \textit{{\fontfamily{lmr}\selectfont f}}$}};
% Text Node
\draw (244,70) node [anchor=north west][inner sep=0.75pt]   [align=left] {{\Large ${\fontfamily{lmr}\selectfont \textbf{q}}\displaystyle \sim {\fontfamily{lmr}\selectfont \textit{Q}}$}};
% Text Node
\draw (278,121) node [anchor=north west][inner sep=0.75pt]  [font=\Large] [align=left] {$\displaystyle \mathcal{A}$};
% Text Node
\draw (244,161) node [anchor=north west][inner sep=0.75pt]  [font=\small] [align=left] {{\Large {\fontfamily{lmr}\selectfont \textit{f}(\textbf{wx}}$\displaystyle ^{\intercal }{\fontfamily{lmr}\selectfont )}$ $ $}};
% $f(\textbf{wx}^\intercal {\fontfamily{lmr}\selectfont )}$ $ $}};
% Text Node
\draw (78,39) node [anchor=north west][inner sep=0.75pt]   [align=left] {{\fontfamily{ptm}\selectfont {\Large User}}};
% Text Node
\draw (455,36) node [anchor=north west][inner sep=0.75pt]   [align=left] {{\fontfamily{ptm}\selectfont {\Large Server}}};
% Text Node
\draw (79,364) node [anchor=north west][inner sep=0.75pt]   [align=left] {{\Large ${\fontfamily{lmr}\selectfont \textbf{q}}\displaystyle \sim {\fontfamily{lmr}\selectfont \textit{Q}}$}};
% Text Node
\draw (47,313) node [anchor=north west][inner sep=0.75pt]   [align=left] {{\fontfamily{ptm}\selectfont {\large Public forum}}};
% Text Node
\draw (459,403) node [anchor=north west][inner sep=0.75pt]   [align=left] {{\fontfamily{ptm}\selectfont {\Large Server}}};
% Text Node
\draw (164,415) node [anchor=north west][inner sep=0.75pt]   [align=left] {{\fontfamily{ptm}\selectfont {\large User}}};
% Text Node
\draw (164,636) node [anchor=north west][inner sep=0.75pt]   [align=left] {{\fontfamily{ptm}\selectfont {\large User}}};
% Text Node
\draw (164,523) node [anchor=north west][inner sep=0.75pt]   [align=left] {{\fontfamily{ptm}\selectfont {\large User}}};
% Text Node
\draw (294,432) node [anchor=north west][inner sep=0.75pt]  [font=\large] [align=left] {$\displaystyle \mathcal{A}_{1}$};
% Text Node
\draw (274,490) node [anchor=north west][inner sep=0.75pt]  [font=\footnotesize] [align=left] {{\large {\fontfamily{lmr}\selectfont \textit{f}(\textbf{wx}}$\displaystyle _{1}^{\intercal }${\fontfamily{lmr}\selectfont )}$ $}};
% Text Node
\draw (274,598) node [anchor=north west][inner sep=0.75pt]  [font=\footnotesize] [align=left] {{\large {\fontfamily{lmr}\selectfont \textit{f}(\textbf{wx}}$\displaystyle _{2}^{\intercal }${\fontfamily{lmr}\selectfont )}$ $}};
% Text Node
\draw (294,540) node [anchor=north west][inner sep=0.75pt]  [font=\large] [align=left] {$\displaystyle \mathcal{A}_{2}$};
% Text Node
\draw (294,652) node [anchor=north west][inner sep=0.75pt]  [font=\large] [align=left] {$\displaystyle \mathcal{A}_{3}$};
% Text Node
\draw (274,710) node [anchor=north west][inner sep=0.75pt]  [font=\footnotesize] [align=left] {{\large {\fontfamily{lmr}\selectfont \textit{f}(\textbf{wx}}$\displaystyle _{3}^{\intercal }${\fontfamily{lmr}\selectfont )}$ $}};
% Text Node
\draw (137,459) node [anchor=north west][inner sep=0.75pt]   [align=left] {{\Large $\textbf{{\fontfamily{lmr}\selectfont x}}\displaystyle _{1}$}{ \Large $\displaystyle \sim {\fontfamily{lmr}\selectfont \textit{X}}$}};
% Text Node
\draw (137,570) node [anchor=north west][inner sep=0.75pt]   [align=left] {{\Large $\textbf{{\fontfamily{lmr}\selectfont x}}\displaystyle _{2}$}{ \Large $\displaystyle \sim {\fontfamily{lmr}\selectfont \textit{X}}$}};
% Text Node
\draw (137,680) node [anchor=north west][inner sep=0.75pt]   [align=left] {{\Large $\textbf{{\fontfamily{lmr}\selectfont x}}\displaystyle _{3}$}{ \Large $\displaystyle \sim {\fontfamily{lmr}\selectfont \textit{X}}$}};
% Text Node
\draw (442,514) node [anchor=north west][inner sep=0.75pt]   [align=left] {{\LARGE $\textbf{{\fontfamily{lmr}\selectfont w}}\displaystyle \sim {\fontfamily{lmr}\selectfont \textit{W}}$}};
% Text Node
\draw (387,612) node [anchor=north west][inner sep=0.75pt]   [align=left] {{\Large $\displaystyle \mathcal{A}_i \displaystyle \Rightarrow \textbf{{\fontfamily{lmr}\selectfont wx}} \displaystyle _i ^{\intercal } \displaystyle \Rightarrow \textit{{\fontfamily{lmr}\selectfont f}}$}};

\end{tikzpicture}
 \ \\ \ \\ 
\caption{An illustration of the information-theoretic PI problem. (top) In a two-party setting, a server holds a weight vector $\boldw \sim W$ and a user holds data $\boldx \sim X$. The server sends a query $\boldq$ to the user, and the user replies with an answer $\cA$. This answer is used by the server to extract the signal $\boldw\boldx^\intercal$, which is then fed into a machine learning model~$f$. The inference~$f(\boldw\boldx^\intercal)$ is sent back to the user. (bottom) Our scheme can also be used in a multi-user single-server case, where each user has their own data $\boldx_i \sim X$. The query $\boldq$ is published in a public forum, from which it is downloaded by the users. The protocol then proceeds as in the single-user case. The privacy guarantees from the two-party setting remain.}
    \label{fig:system}
\end{figure}
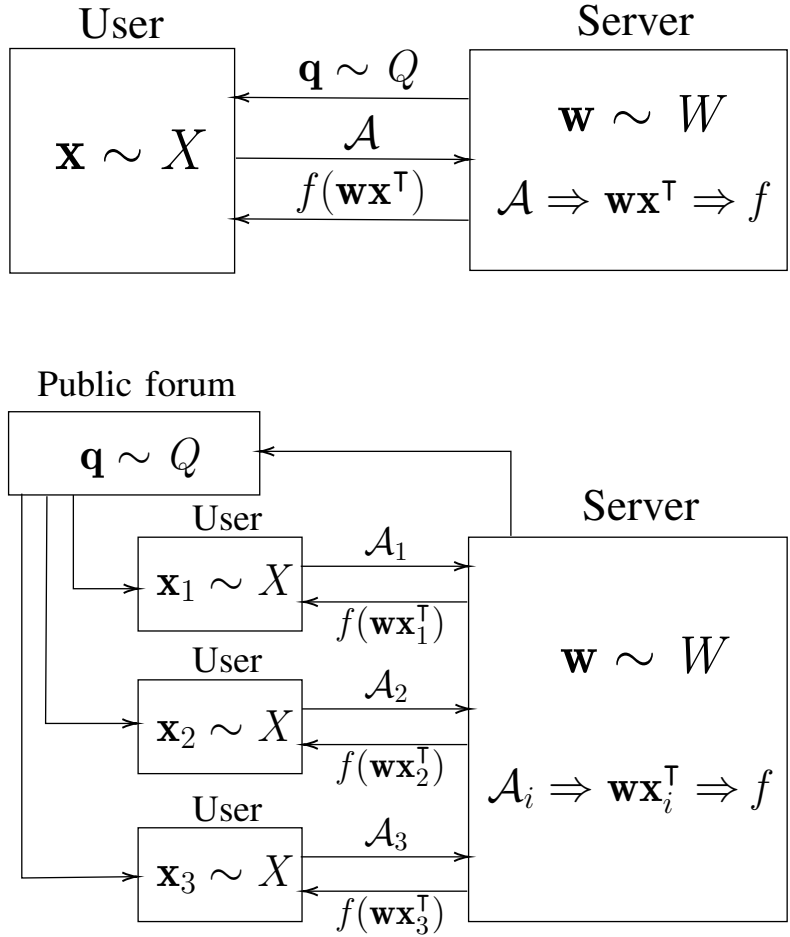

% \begin{figure}[ht!]
%     \centering
%     \includegraphics[scale=0.22]{}
%     \caption{An illustration of the information-theoretic PI problem. (top) In a two-party setting, a server holds a weight vector $\boldw \sim W$ and a user holds data $\boldx \sim X$. The server sends a query $\boldq$ to the user, and the user replies with an answer $\cA$. This answer is used by the server to extract the signal $\boldw\boldx^\intercal$, which is then fed into a machine learning model~$f$. The inference~$f(\boldw\boldx^\intercal)$ is sent back to the user. (bottom) Our scheme can also be used in a multi-user single-server case, where each user has their own data $\boldx_i \sim X$. The query $\boldq$ is published in a public forum, from which it is downloaded by the users. The protocol then proceeds as in the single-user case. The privacy guarantees from the two-party setting remain.}
%     \label{fig:enter-label}
% \end{figure}

The merit of query-answer protocols for PI problems is measured by the following quantities:

\begin{itemize}
    \item \textbf{Publication cost}, i.e., the number of bits $d$ in the query $\boldq$. Notice that this number might change as a function of the value of~$Q$; finding the optimal expected value of~$d$ is a \textit{source coding} problem, and this value is lower bounded by~$H(Q)$ according to the well-known source coding theorem by Shannon~\cite[Th.~5.3.1]{CovThom06}.
    \item \textbf{Server-privacy}, measured by the mutual information $I(W; Q)$ in bits.
    \item \textbf{User-privacy}, measured by the dimension of the subspace on which $\boldx$ is revealed, i.e., the parameter\footnote{More precisely, the parameter $\dim(\RSpan(\{\boldv_i\}_{i=1}^{\ell}))$, but for simplicity of presentation we consider the worst-case assumption, i.e., that the vectors $\boldv_i$ are independent.} $\ell$. 
\end{itemize}

Ideally, we would like to minimize all three quantities simultaneously. However, it will be shown in the sequel that there is an inherent tradeoff between server-privacy and user-privacy, assuming that the decoding part of the protocol is done by using a polynomial whose coefficients depend only on $Q$ and not on $X$. Therefore, our goal is to attain the tradeoff while minimizing publication cost as much as possible. 

\subsection{Our contributions}
We present two protocols for the PI problem where the weight vector $\boldw$ is taken from~$\{\pm 1\}^n$, and both of these protocols are optimal with respect to the privacy tradeoff as well as publication cost\footnote{In the appendix, we also come up with a protocol for weight vectors taking values from complex roots of unity of any order, and further modify this scheme to make it applicable to $\boldw \in \{0, \pm 1\}^n$. However, the optimality of the resulting $\{0, \pm 1\}$ protocol is unclear.}. %\red{[Need to say a couple of words about why we give two different protocols. E.g.:]} 
We provide two equivalent protocols since their generalizations, which are used in later parts of the paper, do not coincide. For the setting where multiple inner products with weight vectors from $\{\pm 1\}^n$ are to be jointly retrieved, we provide a protocol which achieves an optimal privacy tradeoff in most cases, albeit with a sub-optimal publication cost.

The paper then investigates PI problems where weight vectors are restricted to special types of finite real sets, namely \textit{perfect sets} and \textit{hard sets}. These set definitions employ the recently defined additive combinatorial notion of \textit{coefficient complexity}, which quantifies the algebraic structure of finite sets of real numbers as linear combinations of~$\{\pm1\}$ vectors, and was introduced in~\cite{ramkumar2023coefficient} for applications in distributed computation.
%\red{[We need to say a couple of words about what perfect/hard sets mean, otherwise the next statements won't mean anything to the reader. E.g.:]} 
Roughly speaking, a coefficient set~$A$ is called \textit{perfect} if it admits the most compact representation as a linear combination of~$\{\pm1\}$ vectors, and it is called \textit{hard} if any such representation always contains a maximum number of~$\{\pm1\}$ vectors. Using the joint retrieval protocol, we provide a protocol attaining the privacy tradeoff for the perfect set problem in most cases. For hard sets, we present a sub-optimal protocol based on Hadamard matrices. All optimality results mentioned above originate from the privacy tradeoff bound and a bound on the expected publication cost that we derive for arbitrary finite sets.

\subsection{Notations}
All vectors in this work are row vectors, unless stated otherwise. Throughout this paper, logarithms are in base~$2$, the notation $[n] \triangleq \{1, \ldots, n\}$ is used, and we refer to~$\bF_2$ as the binary field represented by $\{\pm 1\}$, i.e., $-1$ represents the Boolean ``one,'' and $1$ represents the Boolean ``zero.'' %\red{[somewhere around here we need to remind the reader that by this representation over~$\bF_2$, XOR in~$\bF_2$ is equivalent to product over~$\bR$]} 
We use $\oplus, \odot$ for addition and multiplication operations, respectively, in $\bF_2$ and $+, \cdot$ for their counterparts in $\bR$. Notice that since the $\{\pm 1\}$-representation of $\bF_2$ is used, the $\oplus$ operation between elements of $\bF_2$ is identical to the $\cdot$ operation between elements of $\bR$, i.e., $x \oplus y = x \cdot y$ for all $x, y \in \{\pm 1\}$. For all~$x \in \{\pm 1\}$, we have that $1 \odot x = 1$ and $-1 \odot x = x$. We use~$\bigoplus$ for summation of multiple entries in~$\bF_2$ and~$\Sigma$ for summation of multiple entries in~$\bR$. 

For a real vector $\boldu = (u_1, \ldots, u_n)$ and a real number $\gamma$, we have $\gamma \cdot \boldu = (\gamma \cdot u_1, \ldots, \gamma \cdot u_n)$. Similarly, for $\gamma \in \bF_2$ and $\boldu \in \bF_{2}^n$, $\gamma \oplus \boldu = (\gamma \oplus u_1, \ldots, \gamma \oplus u_n)$ and $\gamma \odot \boldu = (\gamma \odot u_1, \ldots, \gamma \odot u_n)$. By $(1, \boldu)$, we denote the $(n+1)$-length vector whose first entry is $1$ and $(i+1)$-th entry is $u_i$ for all $i \in [n]$. A collection of sets $\{A_1, \ldots, A_t\}$ is called a partition of $[n]$ if $A_i \cap A_j = \emptyset$ for every $i \ne j$, $A_i \ne \emptyset$ for all $i \in [t]$, and $\cup_{i=1}^{t} A_i = [n]$. For a partition~$\{A_1,\ldots,A_t\}$ of~$[n]$, we denote by $A_{i,j}$ the $j$-th element of $A_i$ and by $u_{A_{i,j}}$ the entry of $\boldu$ indexed by $A_{i,j}$. For a set $S \subseteq [n]$, we let~$\1_{S}\in\bF_2^n$ be the characteristic vector of~$S$, i.e., $(\1_{S})_j=-1$ if~$j\in S$, and~$(\1_{S})_j=1$ otherwise. For a vector~$\boldu$ of length~$n$ and a set~$S\subseteq[n]$, we let~$\boldu\vert_{S}$ be the vector of length~$|S|$ which contains the entries of~$\boldu$ indexed by~$S$. 
% \red{In addition, for a set of vectors~$\boldu_1,\ldots,\boldu_t$ of length~$n$, we denote by~$\boldz=(\boldu_1\vert_{A_1},\ldots,\boldu_t\vert_{A_t})$ the vector such that~$\boldz\vert_{A_i}=\boldu_i\vert_{A_i}$ for all~$i\in[t]$, i.e., the vector whose restriction to entries indexed by~$A_i$ equals the restriction of~$\boldu_i$ to entries indexed by~$A_i$, for all~$i\in[t]$. [If I'm not mistaken, this isn't an independent notation, but rather follows from a previous one. I agree that it might need a couple of extra words of explanation to prevent confusion, but such words can appear on-the-fly whenever this notation is used. NR]} 
For a set of vectors $\cU$, we use $\cU\vert_S$ to denote the set $\{\boldu\vert_S~\vert~\boldu \in \cU\}$. We naturally extend the notation~$\vert_{S}$ to matrices; that is, for a matrix~$\boldU$, we denote $\boldU\vert_{S}$ as the submatrix of~$\boldU$ formed by the columns whose indices are in $S$.

\subsection{Justification for the user-privacy parameter}\label{section:justificationforell}
As mentioned earlier, for simplicity and generality of our treatment of the PI problem, we choose to quantify the loss of privacy from the user's side (i.e., of the data random variable~$X$) as the number of linear projections sent to the server throughout the protocol. This metric is well justified by (the converse of) known information-theoretic results in the area of linear dimension reduction. In linear dimension reduction, one wishes to project data linearly so as to preserve as much of its informativeness as possible. One popular way of quantifying this informativeness is via mutual information, i.e., via choosing a projection matrix which maximizes the mutual information between the original data and its projected version. This is commonly referred to as the Infomax principle~\cite{linsker1988self}. 
\newline\indent It is well known~(\cite[Sec.~3.2.1,~3.2.2]{deco1996information}, originally~\cite{plumbley1988information,plumbley1991information}) that for Gaussian distributions, the projection which attains the Infomax principle, i.e., maximizes the mutual information between the data and its projection, is given by the~$\ell$ eigenvectors of the covariance matrix which correspond to the~$\ell$ largest eigenvalues, and the~$\ell$ eigenvectors are known as principal components. For all other distributions, the~$\ell$ eigenvectors of the best Gaussian approximation (i.e., a Gaussian distribution with identical first and second moments) are at least as informative as the Infomax solution.
\newline\indent
In the context of PI, however, the goal is reversed to that of dimension reduction, in the sense that one wishes the projection to provide \textit{as little} information about the original data as possible. Therefore, taking~$\ell$ as a privacy metric is a worst-case and monotone quantification of privacy loss; larger~$\ell$ means greater privacy loss, and vice-versa (the exact mutual information is given by the value of the Infomax solution, and is often hard to compute).
%The resulting mutual information will be the value of the Infomax solution in the case of Gaussian distributions, and will gradually deviate from that value as~$X$ gets further away from its Gaussian approximation. 
\newline\indent
In practical terms, given the projections of~$\boldx$ onto~$\boldv_1,\ldots,\boldv_\ell$, the server may deduce that~$\boldx$ lies in some subspace, i.e., in the solution space of a (potentially non-homogeneous) linear system. Data distributions which lie in a small subspace can be a priori projected onto that subspace prior to the execution of our protocol, and privacy loss is then better analyzed with respect to the dimension of the projected data rather than that of the original data.

\section{The~$\{\pm 1\}$ PI problem}\label{section:pmone}

In this section, we consider the case of a weight vector $\boldw$ with $\pm 1$ entries. A server holds a weight vector $\boldw \in \{\pm 1\}^n$, randomly chosen from $W = \textrm{Unif}(\{\pm 1\}^n)$, and a user holds a real-valued data vector $\boldx$ of the same length. The server aims to retrieve the signal $\boldw\boldx^\intercal$ (computed over $\bR$) while safeguarding the privacy of the server and the user. 

% \red{why is this being repeated?}
% \begin{enumerate}[label=\roman*)]
%     \item The server sends to the user a query $\boldq \in \{\pm 1\}^d \sim Q$ for some $d$, where $H(Q\vert W) = 0$.
%     \item The user computes vectors $\boldv_1, \ldots, \boldv_\ell \in \bR^n$ based on $\boldq$ and sends $\cA = \{\boldv_i\boldx^\intercal\}_{i=1}^{\ell }$ to the server.
%     \item The server combines the elements of $\cA$ to retrieve the value $\boldw\boldx^\intercal$. To prove a lower bound in the sequel, it is additionally assumed that this part of the protocol is done by using a polynomial whose coefficients depend only on $Q$ and not on $X$.
% \end{enumerate}

% As stated before, we judge the merit of such protocols by the following quantities:

% \begin{itemize}
%     \item \textbf{Publication cost}, i.e., the number of bits~$d$ in the query $\boldq$. Finding the optimal expected value of $d$ is a source coding problem, and this value is lower bounded by $H(Q)$ according to Shannon~\cite{CovThom06}. 
%     \item \textbf{Server-privacy}, measured by $I(W; Q)$.
%     \item \textbf{User-privacy}, measured by the dimension of the subspace on which $\textbf{x}$ is revealed, i.e., the parameter~$\ell$, assuming that the $\boldv_i$'s are independent. 
% \end{itemize}
It will be shown in the sequel that $I(W; Q) + \ell$ is bounded from below. Therefore, we seek to achieve this lower bound while minimizing publication cost $d$ as much as possible, by devising protocols of the aforementioned query-answer form.

\subsection{Our protocols}\label{section:pmonealgos}
% We refer to~$\bF_2$ as the binary field represented by $\{\pm 1\}$, i.e., $-1$ represents the Boolean ``one,'' and $1$ represents the Boolean ``zero.'' We use $\oplus, \odot$ for operations in $\bF_2$ and $+, \cdot$ for their counterparts in $\bR$. 

% For a vector~$\boldu$ of length~$a$ and a set~$S\subseteq[a]$ we let~$\boldu\vert_{S}$ be the vector of length~$|S|$ which contains the entries of~$\boldu$ indexed by~$S$. Further, for a partition~$\{A_1,\ldots,A_b\}$ of~$[a]$ and a set of vectors~$\boldu_1,\ldots,\boldu_b$ of length~$a$, we denote by~$\boldz=(\boldu_1\vert_{A_1},\ldots,\boldu_b\vert_{A_b})$ the vector such that~$\boldz\vert_{A_i}=\boldu_i\vert_{A_i}$ for all~$i\in[b]$, i.e., the vector whose restriction to entries indexed by~$A_i$ equals the restriction of~$\boldu_i$ to entries indexed by~$A_i$, for all~$i\in[b]$.
% \blue{We extend the notation~$\vert_{A_i}$ to sets of vectors and to matrices in a natural way.}%which equals~$(\boldu_i)_j$ in entries~$j\in A_i$.}

We now present two query-answer protocols for the $\{\pm 1\}$ PI problem called \textit{the coset protocol} and \textit{the random key protocol}. It will be shown later that they both achieve the privacy lower bound and have minimal publication cost. Before providing the protocols, we introduce some requisite notations. As noted above, we employ the $\{\pm1\}$-representation of $\bF_2$. For a privacy parameter $t \in [n]$ and a partition $\cS = \{S_1, \ldots, S_t\}$ of $[n]$, 
let $V_\cS$ be the~$\bF_2$-span of the characteristic vectors of the sets~$S_i$, i.e., $V_\cS \triangleq \FtwoSpan\{\1_{S_i}\}_{i=1}^t$. In addition, let $M \in \bF_{2}^{(n-t)\times n}$ be a parity-check matrix for $V_\cS$, i.e.,~$V_\cS$ is the right kernel of~$M$ over~$\bF_2$, and let $B(M, \boldb)$ be some deterministic algorithm that finds a solution $\boldy$ to the equation $M \odot \boldy^\intercal = \boldb$ over $\bF_{2}$. Both the partition~$\cS$ and the algorithm~$B$ are known to both parties.  

\subsubsection*{The coset protocol}
In this protocol, the server communicates to the user the identity of the coset of~$V_\cS$ containing~$\boldw$ by computing a syndrome, which in turn identifies a shift vector~$\boldu$ that defines that coset (i.e., $\boldw\in V_\cS \oplus \boldu$). Then, the user defines vectors~$\{\boldv_i\}_{i=1}^{t}$ based on $\boldu$ and uses them to respond to the query. Formally, the coset protocol is as follows:

\begin{enumerate}[label=\roman*)]
    \item The server:
    \begin{enumerate}[label=\alph*)]
    \item Publishes $\boldq \triangleq M \odot \boldw^\intercal$.
    %\item Publishes the partition $\mathcal{S} = \{S_1, \ldots, S_t\}$ of $[n]$.
    \item Defines $\boldu \triangleq B(M, \boldq)$ and keeps it private.
    \item Finds the unique $\ell_1, \ldots, \ell_t \in \bF_{2}$ such that $$\boldu = \boldw \oplus \bigoplus_{i=1}^t(\ell_i \odot \1_{S_i})$$ and keeps them private. The $\ell_i$'s exist because $M \odot \boldu^\intercal = M \odot \boldw^\intercal$, and thus $\boldu \in V_\cS \oplus \boldw = \operatorname{Span}_{\bF_{2}}\{\1_{S_i}\}_{i=1}^t \oplus \boldw$.
    \end{enumerate}
    \item The user:
    \begin{enumerate}[label=\alph*)]
    \item Defines $\boldu \triangleq B(M, \boldq)$. This is the same $\boldu$ as the one defined by the server since $B$ is deterministic.
    \item Sends~$\cA = \{(\boldu\vert_{S_i})(\boldx\vert_{S_i})^\intercal\}_{i=1}^t$ to the server (computed over~$\bR$).
    \end{enumerate}
    \item The server combines the values in $\cA$ to compute and output $$\sum_{i=1}^{t}\ell_i\cdot(\boldu\vert_{S_i})(\boldx\vert_{S_i})^\intercal = \boldw\boldx^\intercal.$$   
\end{enumerate}
The last computation is correct because $$\boldu\vert_{S_i} = \boldw\vert_{S_i} \oplus \ell_i = \ell_i \cdot (\boldw\vert_{S_i}),$$
and, recalling $\ell_i \in \{\pm 1\}$ for all $i$, we have
\begin{align*}
\sum_{i=1}^t\ell_i\cdot(\boldu\vert_{S_i})(\boldx\vert_{S_i})^\intercal &= \sum_{i=1}^t\ell_i\cdot(\ell_i\cdot (\boldw\vert_{S_i}))(\boldx\vert_{S_i})^\intercal \\ &=\sum_{i=1}^t(\boldw\vert_{S_i})(\boldx\vert_{S_i})^\intercal \\ &=\boldw\boldx^\intercal.
\end{align*}

\subsubsection*{The random key protocol}
In this protocol, we partition $\boldw$ according to $\cS$ and mask the identity of each resulting segment of $\boldw$ with a random $\{\pm 1\}$ key. The server shares the masked (encoded) version of $\boldw$ as a query, the user returns an answer accordingly, and the server combines elements of the answer to retrieve the desired inner product using the random keys. The protocol works as follows.

\begin{enumerate}[label=\roman*)]
    \item The server publishes $\ell_i \cdot \boldw\vert_{S_i}$ for every $i \in [t]$, where each~$\ell_i \in \{\pm 1\}$ is a random key and is kept private. 
    \item The user sends $\{(\ell_i \cdot \boldw\vert_{S_i})(\boldx\vert_{S_i})^\intercal\}_{i=1}^{t}$ to the server (computed over~$\bR$).
    \item The server computes and outputs $$\sum_{i=1}^{t} \ell_i \cdot(\ell_i\cdot \boldw\vert_{S_i})(\boldx\vert_{S_i})^\intercal = \boldw\boldx^\intercal.$$   
\end{enumerate}

This protocol can be further improved if one chooses the keys in a judicious manner. Specifically, we choose $\ell_i \triangleq w_{S_{i,1}}$. Since the first bit of $w_{S_{i,1}}\cdot \boldw\vert_{S_i}$ is always $1$, the server may omit it from publication, and the user may simply prefix a ``$1$'' bit locally before proceeding with computation. The improved scheme works as follows.

\begin{enumerate}[label=\roman*)]
    \item The server publishes $w_{S_{i,1}}\cdot \boldw\vert_{S_i \setminus \{S_{i,1}\}}$ for every $i \in [t]$.
    \item The user sends $\{(\hat{\boldw}\vert_{S_i})(\boldx\vert_{S_i})^\intercal\}_{i=1}^{t}$ to the server (computed over~$\bR$), where $\hat{\boldw} \in \{\pm 1\}^n$ is defined as $$\hat{\boldw}\vert_{S_i} \triangleq (1, w_{S_{i,1}}\cdot \boldw\vert_{S_i \setminus \{S_{i,1}\}}).$$
    \item The server computes and outputs \begin{align*} \sum_{i=1}^{t}w_{S_{i,1}}\cdot(\hat{\boldw}\vert_{S_i})(\boldx\vert_{S_i})^\intercal &= \sum_{i=1}^{t}w_{S_{i,1}}\cdot(1, w_{S_{i,1}}\cdot \boldw\vert_{S_i \setminus \{S_{i,1}\}})(\boldx\vert_{S_i})^\intercal \\ &= \sum_{i=1}^{t}(w_{S_{i,1}}, \boldw\vert_{S_i \setminus \{S_{i,1}\}})(\boldx\vert_{S_i})^\intercal \\ &=\sum_{i=1}^t(\boldw\vert_{S_i})(\boldx\vert_{S_i})^\intercal =\boldw\boldx^\intercal.   
    \end{align*}
\end{enumerate}

\subsection{Merit of the protocols}

Next we analyze the coset protocol and the improved random key protocol in terms of publication cost, server-privacy and user-privacy. It will be later shown in Section~\ref{section:pmoneoptimality} that both these protocols are optimal in terms of the privacy tradeoff and publication cost.

\begin{theorem}
    For any privacy parameter $t \in [n]$, both the coset protocol and the improved random key protocol for the $\{\pm 1\}$ PI problem have publication cost $d = n - t$, server-privacy $I(W; Q) = n - t$, and user-privacy $\ell = t$. 
\end{theorem}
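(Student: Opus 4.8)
The plan is to verify the three claimed quantities — publication cost $d$, server-privacy $I(W;Q)$, and user-privacy $\ell$ — separately, and for each of the two protocols. The only part that is not immediate bookkeeping is server-privacy, so I would first reduce $I(W;Q)$ to $H(Q)$: in both protocols the query distribution $Q$ is a deterministic function of $W$ (as the query–answer template demands, $H(Q\vert W)=0$), whence $I(W;Q)=H(Q)-H(Q\vert W)=H(Q)$. It then suffices to show that in each protocol $\boldq$ is \emph{uniformly} distributed on $\bF_2^{\,n-t}$, which gives $H(Q)=n-t$.

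For the coset protocol: the query is $\boldq=M\odot\boldw^\intercal$ with $M\in\bF_2^{(n-t)\times n}$, so it is an $(n-t)$-bit string regardless of its value, giving $d=n-t$. For uniformity I would show $\operatorname{rank}_{\bF_2}(M)=n-t$, equivalently $\dim_{\bF_2}V_\cS=t$, i.e.\ that $\{\1_{S_i}\}_{i=1}^t$ are $\bF_2$-linearly independent. This holds because, in the $\{\pm1\}$-representation (where $\oplus$ is multiplication), $\bigoplus_{i\in T}\1_{S_i}$ is the characteristic vector of $\bigcup_{i\in T}S_i$ for any $T\subseteq[t]$ (the $S_i$ being pairwise disjoint), and this equals the all-ones vector iff $T=\emptyset$ since each $S_i\neq\emptyset$. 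With $M$ of full row rank, $\boldw\mapsto M\odot\boldw^\intercal$ is a surjection $\bF_2^n\to\bF_2^{\,n-t}$ whose fibers are the cosets of $V_\cS$, all of size $2^t$; since $W$ is uniform, $Q$ is uniform on $\bF_2^{\,n-t}$, so $I(W;Q)=H(Q)=n-t$. For user-privacy, the user's vectors are the length-$n$ extensions of $\boldu\vert_{S_1},\dots,\boldu\vert_{S_t}$, supported on the disjoint nonempty sets $S_i$ and entrywise nonzero since $\boldu\in\{\pm1\}^n$; hence they are $\bR$-linearly independent and $\ell=t$ (matching the worst-case convention of the footnote).

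For the improved random key protocol: the server publishes $w_{S_{i,1}}\cdot\boldw\vert_{S_i\setminus\{S_{i,1}\}}$ for $i\in[t]$, of total bit-length $\sum_{i=1}^t(|S_i|-1)=n-t$ because $\cS$ partitions $[n]$; thus $d=n-t$. For server-privacy I would observe that $\boldw\mapsto\hat{\boldw}$ (with $\hat{\boldw}\vert_{S_i}=w_{S_{i,1}}\cdot\boldw\vert_{S_i}$) is exactly $2^t$-to-one onto $\{\boldz\in\{\pm1\}^n:(\boldz\vert_{S_i})_1=1\ \forall i\}$, a set of size $2^{n-t}$: the $2^t$ preimages of a given $\hat{\boldw}$ are obtained by picking signs $(s_1,\dots,s_t)\in\{\pm1\}^t$ and setting $\boldw\vert_{S_i}=s_i\cdot\hat{\boldw}\vert_{S_i}$, which is consistent since then $w_{S_{i,1}}=s_i$. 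As $W$ is uniform, $\hat{\boldw}$ is uniform on that set, and $\boldq$ — which is $\hat{\boldw}$ with the $t$ coordinates known to equal $1$ deleted — is uniform on $\bF_2^{\,n-t}$, giving $I(W;Q)=H(Q)=n-t$. User-privacy is identical to the previous case: the user's vectors are the extensions of $\hat{\boldw}\vert_{S_1},\dots,\hat{\boldw}\vert_{S_t}$, with disjoint nonempty supports and no zero entries, so $\ell=t$.

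I do not expect a real obstacle here; the statement is essentially a correctness-and-accounting consequence of the two constructions. The one point demanding care is the uniformity of $Q$: for the coset protocol it rests on the $\bF_2$-linear independence of $\{\1_{S_i}\}$, which must be argued within the $\{\pm1\}$ arithmetic, and for the random key protocol on the clean $2^t$-to-one description of $\boldw\mapsto\hat{\boldw}$. Degenerate parameters are harmless: $t=n$ forces each $S_i$ to be a singleton, $V_\cS=\bF_2^n$, $M$ the empty $0\times n$ matrix, $\boldq$ the empty string, and all three quantities equal $0=n-t$.
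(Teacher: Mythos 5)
Your proof is correct and follows essentially the same route as the paper's: both arguments reduce to the fact that the query partitions $\{\pm 1\}^n$ into $2^{n-t}$ equiprobable fibers of size $2^t$ each (cosets of $V_\cS$ in one protocol, preimages of $\boldw\mapsto\hat{\boldw}$ in the other), and you compute $I(W;Q)$ as $H(Q)$ whereas the paper computes it as $H(W)-H(W\vert Q)$ --- the same counting viewed from the other side of the mutual-information identity. The details you supply that the paper treats as immediate, namely the $\bF_2$-independence of $\{\1_{S_i}\}_{i=1}^t$ (hence $\operatorname{rank}_{\bF_2}M=n-t$ and the uniformity of $Q$) and the $\bR$-independence of the user's answer vectors (so $\ell=t$ exactly rather than by the worst-case convention), are correct and strengthen the argument without changing it.
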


\begin{proof}
    For the coset protocol, publication cost and user-privacy claims are immediate by definition. Because $$I(W; Q) = H(W) - H(W|Q) = n - H(W|Q),$$ it suffices to show that $H(W|Q) = t$. Notice that disclosing $Q = \boldq$ reveals the identity of the coset of $V_\cS$ containing~$\boldw$. Since $\boldw$ is uniformly distributed over $\{\pm 1\}^n$, and since cosets of $V_\cS$ are of size $2^t$, it follows that $W|Q = \boldq$ is uniform over a set of size $2^t$, and hence $H(W|Q = \boldq) = t$. Thus, 
    \begin{align*}
        H(W|Q) &= \sum_{\boldq}\Pr(\boldq)H(W|Q=\boldq) \\ &= H(W|Q = \boldq) = t.
    \end{align*}

    For the improved random key protocol, the user-privacy claim is immediate by definition. With the informed choice of keys, the server saves one bit of publication cost in every $\boldw\vert_{S_i}$, therefore lowering $d$ to $n - t$. To compute $I(W; Q)$, notice that after the query is published, from the user's perspective, the only source of uncertainty of $\boldw$ comes from the keys, and there are a total of $2^t$ possibilities for their values, all of them equiprobable. It follows similarly that $W|Q = \boldq$ is uniform over a set of size $2^t$, and 
    \begin{align*}
        I(W; Q) = H(W) - H(W|Q) &= n - t. \qedhere
    \end{align*} 
\end{proof}
In Appendix \ref{section:rootsofunity}, we generalize the improved random key protocol to the PI problem where the weights are restricted to roots of unity of any order. This roots-of-unity protocol is then modified to obtain a protocol for the $\{0, \pm 1\}$ PI problem.

\section{The PI problem for joint retrieval of multiple signals}\label{section:jointretrieval}
This section addresses the problem of retrieving multiple signals jointly. The server now possesses multiple $\{\pm 1\}$-vectors and seeks to retrieve their respective inner products with the data vector simultaneously. Formally, the server holds a set of $m \ge 2$ weight vectors $\boldw^{(1)}, \ldots, \boldw^{(m)} \in \{\pm 1\}^{n}$, chosen independently at random from $\textrm{Unif}(\{\pm 1\}^n)$, and aims to retrieve a set of inner products $\{\boldw^{(i)}\boldx^\intercal\}_{i=1}^{m}$.

Consider the field $\bF_{2^m}$ and identify the elements of this field with $\{\pm 1\}$-vectors of length~$m$. In this section, we use $\oplus, \odot$ for addition and multiplication operations, respectively, in $\bF_{2^m}$. Under the $\{\pm 1\}$-representation of $\bF_{2^m}$, for all $x, y \in \{\pm 1\}^m$, notice that $x \oplus y$ is identical to the element-wise product of $x$ and $y$ over $\bR$. Let $\boldW \in \{\pm 1\}^{m \times n}$ be the matrix whose rows are the~$m$ vectors $\boldw^{(1)}, \ldots, \boldw^{(m)}$. It follows that obtaining the required set of signals is equivalent to computing $\boldW\boldx^\intercal$. Since each column of the matrix $\boldW$ corresponds to an element in $\bF_{2^m}$, it follows that $\boldW$ can be represented as a vector in $\bF_{2^m}^n$, which we denote by $\boldw$. Note that $\boldw$ can be viewed as being randomly chosen from a uniform distribution $W = \textrm{Unif}(\bF_{2^m}^n)$. For $\boldu = (u_1,\ldots,u_n) \in \bF_{2^m}^n$, the~$\{\pm 1\}$-matrix representation of~$\boldu$ is the matrix~$U \in \{\pm 1\}^{m \times n}$ having the $m$-length $\{\pm 1\}$-vector corresponding to $u_i$ as the $i$-th column, for all $i \in [n]$.

We now give an overview of our protocol for the joint retrieval PI problem. The protocol is motivated by the coset protocol for the~$\{\pm 1\}$ PI problem presented in Section \ref{section:pmonealgos}. We first establish that $\bF_{2^m}$ can be partitioned into pairwise disjoint subsets such that each subset is contained in a real subspace of low dimension. A key concept introduced shortly is that of a \textit{good} vector for a certain partition of $[n]$. A vector $\boldw \in \bF_{2^m}^n$ is considered \textit{good} for a partition $\cS = \{S_1, \ldots, S_t\}$ of $[n]$ if, for every segment $\boldw\vert_{S_i}$, there is a subset of $\bF_{2^m}$ (in the aforementioned partition of the field) to which all the entries of $\boldw\vert_{S_i}$ belong; this property ensures that the entries lie in a subspace of low dimension. Under appropriate choices of parameters, for every $\boldw \in \bF_{2^m}^n$ there is a partition of $[n]$ for which $\boldw$ is good. The server needs to find such a partition of $[n]$ and communicate its identity. 

We define $V_\cS$ as the span of the characteristic vectors of the sets in the partition~$\cS$, similar to what is done in Section \ref{section:pmonealgos}, but over~$\bF_{2^m}$ instead of over~$\bF_2$. The server publishes the identity of the coset of $V_\cS$ containing $\boldw$ by computing a syndrome. The server and the user then agree on a shift vector~$\boldu$ that belongs to the same coset as $\boldw$, using a deterministic algorithm that takes the syndrome as input and returns the shift vector as output. 
%\red{[This sentence is a bit unclear. Do we mean``the server and the user communicate a vector $\boldu \in \bF_{2^m}^n$ which lies in the same coset of~$V_\cS$ as $\boldw$''? If so, this needs to be tied to the previous sentence about communicating a syndrome.]}. 
The user finds a basis for every~$\boldu\vert_{S_i}$ and sends responses accordingly. The server uses these responses to compute $\boldW\boldx^\intercal$. The correctness and merit of the protocol hinge on Lemma~\ref{lemma:mminuslogq} which follows, stating that every segment $\boldu\vert_{S_i}$ has the same row span as the corresponding segment~$\boldw\vert_{S_i}$.

\subsection{Preliminaries}
  Let $q$ be an integer power of two such that $1 \le q \le 2^{m-1}$. As previously mentioned, we represent elements of $\bF_{2^m}$ using $\{\pm 1\}$-vectors of length~$m$.
  %\red{, and hence, any subset of $\bF_{2^m}$ is contained in some subspace of $\bR^m$ [remove]}. 
  We partition $\bF_{2^m}$ into $q$ pairwise disjoint sets $F_1, \ldots, F_q$ of equal size, i.e., $\bF_{2^m} = \cup_{i=1}^{q} F_i$ with $|F_i| = \frac{2^m}{q}$ for every $i \in [q]$, such that the~$\bR$-span of each set is an~$\bR$-subspace of dimension $\log (\frac{2^m}{q}) = m - \log q$. Such a partition exists due to the following lemma.

% In the sequel we use the following notations. For a vector~$\boldu$ of length~$a$ and a set~$S\subseteq[a]$ we let~$\boldu\vert_{S}$ be the vector of length~$|S|$ which contains the entries of~$\boldu$ indexed by~$S$. Further, for a partition~$\{A_1,\ldots,A_b\}$ of~$[a]$ and a set of vectors~$\boldu_1,\ldots,\boldu_b$ of length~$a$, we denote by~$\boldz=(\boldu_1\vert_{A_1},\ldots,\boldu_b\vert_{A_b})$ the vector such that~$\boldz\vert_{A_i}=\boldu_i\vert_{A_i}$ for all~$i\in[b]$, i.e., the vector whose restriction to entries indexed by~$A_i$ equals the restriction of~$\boldu_i$ to entries indexed by~$A_i$, for all~$i\in[b]$.
% \blue{We extend the notation~$\vert_{A_i}$ to sets of vectors and to matrices in a natural way.}%which equals~$(\boldu_i)_j$ in entries~$j\in A_i$.}

\begin{lemma}\label{lemma:fieldpartition}
    Let~$q$ ($1 \le q \le 2^{m-1}$) be an integer power of two. Then, there exists a partition $\mathcal{F} = \{F_1, \ldots, F_q\}$ of the extension field~$\bF_{2^m}$ into~$q$ pairwise disjoint subsets of size $\frac{2^m}{q}$ such that $$\dim(\operatorname{Span}_{\bR} (F_i)) = m - \log q$$ for every $i \in [q]$.
\end{lemma}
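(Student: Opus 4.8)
The plan is to exhibit an explicit partition. Write $q = 2^k$ with $0 \le k \le m-1$ (this range is precisely the hypothesis $1 \le q \le 2^{m-1}$), and view $\bF_{2^m}$, via the $\{\pm 1\}$-representation, as $\{\pm 1\}^m$ equipped with coordinatewise multiplication $\odot$. I would cut the cube using the $k$ ``equality'' constraints $x_1 = x_2 = \cdots = x_{k+1}$ and their shifts: for each $\boldc = (c_1,\ldots,c_k) \in \{\pm 1\}^k$, set
\[
F_{\boldc} \triangleq \{\,\boldx \in \{\pm 1\}^m : x_i \cdot x_{i+1} = c_i \text{ for all } i \in [k]\,\}.
\]
Since $\boldx \mapsto (x_1 x_2,\, x_2 x_3,\, \ldots,\, x_k x_{k+1})$ is a well-defined surjection of $\{\pm 1\}^m$ onto $\{\pm 1\}^k$, the family $\{F_{\boldc}\}_{\boldc}$ is a partition of $\{\pm 1\}^m$ into $q = 2^k$ blocks, and each block is parametrized freely by $x_1$ together with the $m-k-1$ ``tail'' coordinates $x_{k+2},\ldots,x_m$, so $|F_{\boldc}| = 2^{m-k} = 2^m/q$. (Equivalently, the $F_{\boldc}$ are the cosets of the $\bF_2$-subspace $\{\boldx \in \{\pm 1\}^m : x_1 = \cdots = x_{k+1}\}$.) All of this is routine bookkeeping; the content of the lemma is the dimension count below.

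To compute $\dim_{\bR}\!\big(\RSpan(F_{\boldc})\big)$, I would first solve the constraints: $x_j = x_1 \cdot \prod_{i=1}^{j-1} c_i$ for $1 \le j \le k+1$. Writing $\boldd = \boldd(\boldc) \in \{0,\pm 1\}^m$ for the vector with $d_j = \prod_{i<j} c_i$ when $j \le k+1$ and $d_j = 0$ otherwise (so $d_1 = 1$), every $\boldx \in F_{\boldc}$ has the form $\boldx = x_1 \cdot \boldd + (\boldzero_{k+1},\boldy)$ with $x_1 \in \{\pm 1\}$ and $\boldy \in \{\pm 1\}^{m-k-1}$. Hence $\RSpan(F_{\boldc}) \subseteq \RSpan(\boldd) + \big(\{\boldzero_{k+1}\} \times \bR^{m-k-1}\big)$, a subspace of dimension at most $1 + (m-k-1) = m-k$. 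For the matching lower bound, observe that $F_{\boldc}$ contains $\boldd + (\boldzero_{k+1},\boldy)$ for every $\boldy \in \{\pm 1\}^{m-k-1}$; differences of such vectors span $\{\boldzero_{k+1}\} \times \bR^{m-k-1}$ (since differences of elements of $\{\pm 1\}^{m-k-1}$ already span all of $\bR^{m-k-1}$), while any single vector $\boldd + (\boldzero_{k+1},\boldy)$ has a nonzero entry among the first $k+1$ coordinates (namely $d_1 = 1$) and so lies outside that coordinate subspace. Therefore $\RSpan(F_{\boldc})$ has dimension at least $(m-k-1)+1 = m-k$, hence exactly $m-k = m - \log q$, which is the claim.

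The step I expect to be the crux is forcing the dimension to be \emph{exactly} $m - \log q$ and not one larger, and this is also why a quick choice of partition fails. For example, the seemingly natural blocks obtained by fixing the first $k$ coordinates (the cosets of $\{\boldx : x_1 = \cdots = x_k = 1\}$) have $\bR$-span of dimension $m - k + 1$, because the ``constant'' vector $(1^{k},\boldzero_{m-k})$ is $\bR$-independent of the coordinate subspace spanned by the free tail; the same inflation occurs for any odd-weight $\bF_2$-parity constraint. Chaining \emph{weight-two} constraints $x_i = x_{i+1}$ is exactly what avoids this: each such constraint genuinely fuses two coordinates and drops the real dimension by precisely one. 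Verifying this — the two-sided estimate in the previous paragraph — is where the real work lies, whereas disjointness, coverage and cardinality are immediate.
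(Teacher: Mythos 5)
Your proof is correct and is essentially the paper's construction: your blocks $F_{\boldc}$ are exactly the cosets of $\operatorname{Span}_{\bF_2}\{\1_{A_i}\}$ for the particular partition $A_1=\{1,\ldots,k+1\},\ A_2=\{k+2\},\ldots,A_{m-k}=\{m\}$ of $[m]$, whereas the paper allows an arbitrary partition of $[m]$ into $m-\log q$ blocks. The only substantive difference is that you also verify the matching lower bound on $\dim(\RSpan(F_{\boldc}))$, which the paper asserts after proving only the containment in an $(m-\log q)$-dimensional subspace.
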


\begin{proof}
    Let $p \triangleq m - \log q$, let $\{A_1, \ldots, A_p\}$ be a partition of~$[m]$ into~$p$ distinct subsets, and let $$F_1 \triangleq \operatorname{Span}_{\bF_2}\{\1_{A_i}\}_{i=1}^p.$$ It follows that $|F_1| = 2^p = \frac{2^m}{q}$, and since cosets of $F_1$ are also of size $2^p$, there are a total of $\frac{2^m}{2^p} = q$ cosets, including~$F_1$ itself. We fix~$F_2,\ldots,F_q$ as the remaining cosets of~$F_1$, and 
    %Thus if we partition $\bF_{2^m}$ into cosets $F_1, \ldots, F_q$ of $F_1$, then 
    it remains to show that $\dim(\operatorname{Span}_{\bR} (F_i)) = m - \log q$ for all~$i \in [q]$. 

    For a given~$i \in [q]$, let $\boldu \in \bF_2^m$ be a shift vector that defines $F_i$, i.e., $$F_i = F_1 \oplus \boldu.$$ For every vector $\boldz \in F_i$, there exists $\boldy \in F_1$ for which $\boldz = \boldy \oplus \boldu$. It follows from the definition of $F_1$ that for all $j \in [p]$, there is some $\alpha_j \in \bF_2$ such that $y_g = \alpha_j$ for all $g \in A_j$.
    %\red{[Don't we mean~$y_g=\alpha_j$ for all~$g\in A_j$? NR]}. 
    As mentioned before, since the $\{\pm 1\}$-representation of $\bF_2$ is used, the addition operation between elements of $\bF_2$ is identical to the multiplication operation between elements of $\bR$. This results in $$\boldz\vert_{A_j} = \boldy\vert_{A_j} \oplus \boldu\vert_{A_j} = \alpha_j\boldu\vert_{A_j},$$ with $\alpha_j\in\{\pm1\}$ over $\bR$, for all $j \in [p]$.
    % \red{[I tend to think that some elaboration is warranted here. Perhaps start from the~$\bF_2$ structure of~$F_i$, i.e., as~$\bF_2$-linear combinations plus shift, and then show what that implies over~$\bR$.]}. \red{In addition, for a set of vectors~$\boldu_1,\ldots,\boldu_t$ of length~$n$, we denote by~$\boldz=(\boldu_1\vert_{A_1},\ldots,\boldu_t\vert_{A_t})$ the vector such that~$\boldz\vert_{A_i}=\boldu_i\vert_{A_i}$ for all~$i\in[t]$, i.e., the vector whose restriction to entries indexed by~$A_i$ equals the restriction of~$\boldu_i$ to entries indexed by~$A_i$, for all~$i\in[t]$. [If I'm not mistaken, this isn't an independent notation, but rather follows from a previous one. I agree that it might need a couple of extra words of explanation to prevent confusion, but such words can appear on-the-fly whenever this notation is used. NR]}
    %, where $\boldu\vert_{A_i}$ denotes the vector of $\boldu$ restricted to the entries indexed by\footnote{For simplicity, we abuse the notation by referring to the~$A_i$'s as containing consecutive indices, but a similar claim holds for any partition.} $A_i$. 
    %This implies that over $\bR$, 
    Therefore, the vectors in $F_i$ are contained in an $\bR$-subspace spanned by the $p$ vectors $\boldv^{(1)}, \ldots, \boldv^{(p)} \in \bR^m$ defined below:
    \begin{align*} 
        v^{(j)}_g = \begin{cases}
            u_g & g \in A_j, \\ 0 & g \in [m] \setminus A_j.
        \end{cases}
    \end{align*}
    %$\pm \boldu\vert_{A_1}, \pm \boldu\vert_{A_2}, \ldots, \pm \boldu\vert_{A_p}$, and
    Hence, $\dim(\operatorname{Span}_{\bR} (F_i)) = p = m - \log q$ for all~$i \in [q]$.
\end{proof}

As stated in the overview, a key component in the upcoming algorithm is finding a  partition~$\cS=\{S_1,\ldots,S_t\}$ of $[n]$ such that for every~$i \in [t]$, the entries of~$\boldw\vert_{S_i}$ belong to a real subspace of~$\bR^m$ of low dimension. Lemma \ref{lemma:fieldpartition} establishes that a partition of $[n]$ such that all elements of~$\boldw\vert_{S_i}$ come from the same $F_j$ would suffice. In such a partition~$\cS$, since $$\sum_{i=1}^t(\boldW\vert_{S_i})(\boldx\vert_{S_i})^\intercal=\boldW\boldx^\intercal,$$ and since for every~$i \in [t]$, $$\operatorname{rank}_\bR(\boldW\vert_{S_i})\le m-\log q,$$ it follows that projecting each~$\boldx\vert_{S_i}$ on a maximal set of linearly independent rows of~$\boldW\vert_{S_i}$ suffices to compute all~$m$ entries of~$\boldW\boldx^\intercal$. To this end, the server should communicate to the user a basis for the row span of each~$\boldW\vert_{S_i}$ in a private way. Next, we formally define what a \textit{good} vector is for a partition of~$[n]$.

\begin{definition}
    We call a vector~$\boldw \in \bF_{2^m}^n$ \textit{good} for a partition~$\mathcal{S}$ of~$[n]$ if for any $i \in [t]$ there exists some $j \in [q]$ such that all entries of $\boldw\vert_{S_i}$ come from the subset $F_j$ of $\bF_{2^m}$ in the partition of the field according to Lemma~\ref{lemma:fieldpartition}. That is, there exists a mapping~$g:[t]\to[q]$ such that
\begin{align*}
    \{ w_s\vert s\in S_i \}\subseteq F_{g(i)}
\end{align*}
for every~$i\in[t]$. 
\end{definition}
The following lemma establishes for every~$\boldw$ the existence of a partition of $[n]$ for which $\boldw$ is good.

\begin{lemma}
    Let $q$ ($1 \le q \le 2^{m-1}$) be an integer power of two. For each~$\boldw \in \bF_{2^m}^n$, there exists a partition of $[n]$ into $t$ subsets for which $\boldw$ is good as long as $q \le t$.
\end{lemma}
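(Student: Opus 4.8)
The plan is to show that for each $\boldw \in \bF_{2^m}^n$ we can greedily sort the $n$ coordinates according to which of the $q$ field-partition sets $F_1, \ldots, F_q$ their entries lie in, and then argue that this grouping can be refined or merged into exactly $t$ nonempty parts whenever $q \le t \le n$ (the bound $t \le n$ being implicit from $\cS$ being a partition of $[n]$ into $t$ parts). First I would, for each $j \in [q]$, define $T_j \triangleq \{ s \in [n] : w_s \in F_j \}$. Since $\mathcal{F} = \{F_1, \ldots, F_q\}$ is a partition of $\bF_{2^m}$ by Lemma~\ref{lemma:fieldpartition}, the sets $T_j$ are pairwise disjoint and their union is $[n]$; however some of them may be empty, so they do not immediately form a partition of $[n]$ into $t$ parts.

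The key step is to massage the nonempty $T_j$'s into exactly $t$ nonempty blocks, each still contained in a single $T_j$ (hence still having all its $\boldw$-entries inside one $F_{g(i)}$, which is exactly the goodness condition). Let $t' \le q$ be the number of nonempty $T_j$'s. If $t' \le t$, then since $\sum_j |T_j| = n \ge t \ge t'$, at least one nonempty block has more than one element (indeed, repeatedly: as long as we have fewer than $t$ blocks, total size $n \ge t$ forces some block of size $\ge 2$), so we may split off a single coordinate from an oversized block to increase the block count by one; iterating until we reach $t$ blocks. Each split preserves the property that every block sits inside some $F_j$, since splitting a subset of $T_j$ yields two subsets of $T_j$. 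This produces the desired partition $\cS = \{S_1, \ldots, S_t\}$ of $[n]$ together with the map $g : [t] \to [q]$ sending each $S_i$ to the index $j$ with $S_i \subseteq T_j$, so that $\{w_s : s \in S_i\} \subseteq F_{g(i)}$ for every $i \in [t]$.

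The main thing to get right — and the only place the hypothesis $q \le t$ is used — is the counting argument guaranteeing that we never run out of coordinates to split off before reaching $t$ blocks: we start with $t' \le q \le t$ nonempty blocks totaling $n$ elements, and as long as the current number of blocks is $< t \le n$, the pigeonhole principle gives a block of size $\ge 2$ to split. I would also note the boundary condition that a valid partition of $[n]$ into $t$ parts requires $t \le n$, so the statement is implicitly understood under $q \le t \le n$; this is consistent with how $t \in [n]$ is used as a privacy parameter elsewhere. I expect the write-up itself to be short; the only subtlety is phrasing the greedy splitting cleanly (e.g. via induction on $t - t'$) rather than any genuine obstacle.
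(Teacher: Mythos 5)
Your proposal is correct and follows essentially the same route as the paper: the paper also defines the grouping $\hat{S}_j = \{i : w_i \in F_j\}$ and observes that any partition of $[n]$ into $t$ parts refining it works. You simply fill in the splitting/pigeonhole details that the paper leaves to ``it can be seen,'' which is a reasonable elaboration rather than a different argument.
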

\begin{proof}
    Suppose $q \le t$. Consider the partition  $\hat{\cS} = \{\hat{S}_1, \ldots, \hat{S}_q\}$ of $[n]$ defined as 
    \begin{align*}
        \hat{S}_i = \{j\vert w_j \in F_i\}.
    \end{align*}
    It can be seen that $\boldw$ is good for any partition $\mathcal{S} = \{S_1, \ldots, S_t\}$ of $[n]$ such that for every $i \in [t]$, there exists some $j \in [q]$ for which $S_i \subseteq \hat{S}_j$.
\end{proof}

\begin{example}
    Let $n = 9$, $m = q = 4$, $t = 5$. We have $m - \log q = 2$. First, we demonstrate that there exists an appropriate partition of the field $\bF_{2^4}$. Let $\{A_1 = \{1, 2\}, A_2 = \{3, 4\}\}$ be a partition of $[m] = [4]$. Then,
    \begin{align*}
        F_1 &= \operatorname{Span}_{\bF_2}\{[-1, -1, 1, 1], [1, 1, -1, -1]\} \\ &= \{[1, 1, 1, 1], [1, 1, -1, -1], [-1, -1, 1, 1], [-1, -1, -1, -1]\},
    \end{align*}
    and therefore 
    \begin{align*}
        F_2 &= \{[-1, 1, 1, 1], [-1, 1, -1, -1], [1, -1, 1, 1], [1, -1, -1, -1]\}, \\
        F_3 &= \{[1, 1, -1, 1], [1, 1, 1, -1], [-1, -1, -1, 1], [-1, -1, 1, -1]\}, \\
        F_4 &= \{[1, -1, -1, 1], [1, -1, 1, -1], [-1, 1, -1, 1], [-1, 1, 1, -1]\}
    \end{align*}
    are cosets of $F_1$. It can be easily checked that $F_1$, $F_2$, $F_3$ and $F_4$ are each contained in an $\bR$-subspace of dimension $2$, and that $\bF_{2^4} = \cup_{i=1}^{4} F_i$. Now, let 
    \begin{align*}
        \boldw^{(1)} &= [-1, \phantom{-}1, -1, \phantom{-}1, \phantom{-}1, -1, -1, -1, \phantom{-}1], \\
        \boldw^{(2)} &= [-1, \phantom{-}1, \phantom{-}1, \phantom{-}1, -1, \phantom{-}1, \phantom{-}1, \phantom{-}1, \phantom{-}1], \\
        \boldw^{(3)} &= [\phantom{-}1, \phantom{-}1, \phantom{-}1, \phantom{-}1, \phantom{-}1, -1, \phantom{-}1, -1, -1], \\
        \boldw^{(4)} &= [\phantom{-}1, \phantom{-}1, -1, -1, -1, \phantom{-}1, \phantom{-}1, -1, -1],
    \end{align*}
    which means
    \begin{align*}
        \boldW =
        \begin{bmatrix}
            -1 & \phantom{-}1 & -1 & \phantom{-}1 & \phantom{-}1 & -1 & -1 & -1 & \phantom{-}1 \\
            -1 & \phantom{-}1 & \phantom{-}1 & \phantom{-}1 & -1 & \phantom{-}1 & \phantom{-}1 & \phantom{-}1 & \phantom{-}1 \\
            \phantom{-}1 & \phantom{-}1 & \phantom{-}1 & \phantom{-}1 & \phantom{-}1 & -1 & \phantom{-}1 & -1 & -1 \\ \phantom{-}1 & \phantom{-}1 & -1 & -1 & -1 & \phantom{-}1 & \phantom{-}1 & -1 & -1
        \end{bmatrix}.
    \end{align*}
    %To find a partition of $[n]$ for which $\boldw$ is good, 
    It is straightforward to find a partition of~$[n] = [9]$ into~$t = 5$ subsets for which~$\boldw$ is good. For example, $S_1 = \{1, 9\}, S_2 = \{2\}, S_3 = \{3, 5, 6\}, S_4 = \{4, 7\}$, and~$S_5 = \{8\}$. Then, we have $\{w_s\vert s\in S_1 \}\subseteq F_1$, $\{w_s\vert s\in S_2 \}\subseteq F_1$, $\{w_s\vert s\in S_3 \}\subseteq F_4$, $\{w_s\vert s\in S_4 \}\subseteq F_3$, and $\{w_s\vert s\in S_5 \}\subseteq F_2$.
\end{example}

\subsection{Our protocol}\label{section:jointprotocol}
We now present a protocol that is a generalization of the coset protocol in Section \ref{section:pmonealgos}. Recall that~$t \in [n]$ is a privacy parameter of the protocol, and fix~$q$ as any non-negative integer power of two such that~$q\le \min\{t,2^{m-1}\}$. Similar to Section \ref{section:pmonealgos}, for a given partition $\mathcal{S} = \{S_1, \ldots, S_t\}$ of $[n]$, let $V_\cS \triangleq \operatorname{Span}_{\bF_{2^m}}\{\1_{S_i}\}_{i=1}^t$, where $\1_{S_i}$ the characteristic vector of~$S_i$ over~$\bF_{2^m}$. 
In addition, let $M \in \bF_{2^m} ^{(n-t)\times n}$ be a parity-check matrix for $V_\cS$, i.e.,~$V_\cS$ is the right kernel of~$M$ over~$\bF_{2^m}$, and let $B(M, \boldb)$ be some deterministic algorithm that finds a solution~$\boldy$ to the equation $M \odot \boldy^\intercal = \boldb$ over $\bF_{2^m}$. The algorithm~$B$ is assumed to be publicly known. The protocol works as follows.

\begin{enumerate}[label=\roman*)]
    \item The server:
    \begin{enumerate}[label=\alph*)]
    \item Finds and publishes a partition~$\cS= \{S_1, \ldots, S_t\}$ of $[n]$ for which~$\boldw$ is good.
    \item Publishes $\boldq \triangleq M \odot \boldw^\intercal$.
    %\item Publishes the partition $\mathcal{S} = \{S_1, \ldots, S_t\}$ of $[n]$.
    \item Defines $\boldu \triangleq B(M, \boldq)$ and keeps it private.
    \item Finds a matrix $U_i\in \{\pm1\}^{m\times |S_i|}$ that is the $\{\pm1\}$-matrix representation of~$\boldu\vert_{S_i}$ for every $i \in [t]$. 
    \item Finds the unique $\ell_1, \ldots, \ell_t \in \bF_{2^m}$ such that $$\boldu = \boldw \oplus \bigoplus_{i=1}^t(\ell_i \odot \1_{S_i})$$ and keeps them private. These $\ell_i$'s exist since $M \odot \boldu^\intercal = M \odot \boldw^\intercal$, and thus $$\boldu \in V_\cS \oplus \boldw = \operatorname{Span}_{\bF_{2^m}}\{\1_{S_i}\}_{i=1}^t \oplus \boldw.$$ % as vectors of length $m$.
    \end{enumerate}
    \item The user:
    \begin{enumerate}[label=\alph*)]
    \item Defines $\boldu \triangleq B(M, \boldq)$. This is the same $\boldu$ as the one defined by the server since $B$ is deterministic.
    \item Finds a matrix $U_i\in \{\pm1\}^{m\times |S_i|}$ that is the~$\{\pm1\}$-matrix representation of~$\boldu\vert_{S_i}$ for every $i \in [t]$ (possible since~$\cS$ has been published). 
    \item Finds a submatrix~$R_i \in \{\pm 1\}^{r_i \times |S_i|}$ of~$U_i$ which contains the lexicographically first maximal set of linearly independent (over~$\bR$) rows of~$U_i$, with $r_i$ denoting the size of this set.
    %number of linearly independent (over~$\bR$) rows of~$U_i$.
    %The user finds a maximal set~$\boldb_1^{(i)},\boldb_2^{(i)},\ldots$ of linearly independent (over~$\bR$) rows of~$U_i$.
    %item Finds for every $i \in [t]$ a basis $B_i$ of at most $m - \log q$ vectors over $\bR$ to $\boldu\vert_{S_i}$, i.e., $B_i = \{\boldb_j\}_{j=1}^{\le m - \log q}$, where $\boldb_j \in \{0, \pm 1\}^n$ for every $j$, and the last $n - |S_i|$ entries of $\boldb_j$ are all $0$. 
    \item Sends~$\{R_i(\boldx\vert_{S_i})^\intercal\}_{i=1}^t$ to the server (computed over~$\bR$).
    %\item Sends $\mathcal{A} = \{\boldb_j \boldx^\intercal\}_{j=1}^{\le m - \log q}$ to the server for every $i \in [t]$. The $\boldb_j \boldx^\intercal$'s are computed over $\bR$. \\
    \end{enumerate}
    \item The server:
    \begin{enumerate}[label=\alph*)]
    \item Finds matrices $R_i \in \{\pm 1\}^{r_i \times |S_i|}$ for all $i \in [t]$ in the same manner that the user does. 
    \item Computes matrices $Q_i\in \bR^{m \times r_i}$ such that $Q_iR_i=U_i$ for all $i \in [t]$.
    %\item Constructs for each $B_i$ a matrix $\Bar{B}_i\in \bR^{r_i \times n}$ such that~$\Bar{B}_i\vert_{S_i}=B_i$ and $\Bar{B}_i\vert_{S_i^c}=\boldzero$, where~$\boldzero$ is the zero matrix, and $S_i^c=[n]\setminus S_i$.
    %\item \blue{For each~$i\in[t]$, constructs a vector~$\boldb_i\in\bR^n$ such that~$\boldb_i\vert_{S_i}=B_i(\boldx\vert_{S_i})^\intercal$ and~$\boldb_i\vert_{S_i^c}=\boldzero$.}
    \item %Concludes the protocol by computing $\sum_{i=1}^{t}\diag(\ell_i)A_iB_i(\boldx\vert_{S_i})^\intercal = \boldw\boldx^\intercal$, where $\diag(\ell_i)$ is the diagonal matrix with the entries of $\ell_i$ on its diagonal.
    Computes and outputs $$\sum_{i=1}^{t}\diag(\ell_i)Q_iR_i(\boldx\vert_{S_i})^\intercal,$$ where $\diag(\ell_i)$ is the diagonal matrix with the entries of the vector in $\{\pm 1\}^m$ corresponding to $\ell_i$ on its diagonal.
    \end{enumerate}
\end{enumerate}

\subsection{Correctness of the protocol}

To prove the correctness of the above protocol, we first need the following lemma.  

\begin{lemma}\label{lemma:mminuslogq}
    Let $V_\cS \triangleq \operatorname{Span}_{\bF_{2^m}}\{\1_{S_i}\}_{i=1}^t$ for some partition $\mathcal{S} = \{S_1, \ldots, S_t\}$ of $[n]$, and let~$\cF=\{F_1,\ldots,F_q\}$ be the partition of~$\bF_{2^m}$ according to Lemma \ref{lemma:fieldpartition}. 
    % Then, for every $i \in [t]$, any coset $V_\cS \oplus \boldw$ of $V_\cS$ with shift vector $\boldw$ such that $\boldw\vert_{S_i} \in F_j^{|S_i|}$ for some $j \in [q]$, 
    % %where $\boldw\vert_{S_i}$ denotes the vector of $\boldw$ restricted to the entries indexed by $S_i$, 
    % satisfies $\dim(\operatorname{Span}_{\bR} V_\cS \oplus \boldw\vert_{S_i}) = m - \log q$.
    For every~$\boldw\in\bF_{2^m}^n$ such that $\boldw$ is good for $\cS$,
    we have that~$$\dim(\RSpan ((V_\cS\oplus\boldw)\vert_{S_i}))= m - \log q$$ for all~$i\in[t]$.
\end{lemma}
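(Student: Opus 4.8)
The plan is to spell out $(V_\cS\oplus\boldw)\vert_{S_i}$ explicitly, reduce it to a single coset of the $\bF_2$-subspace $F_1$ used in the proof of Lemma~\ref{lemma:fieldpartition}, and read off the dimension from there; the one thing to be vigilant about is that addition in $\bF_{2^m}$ is coordinatewise real multiplication on the $\{\pm1\}^m$ representatives.

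Two preliminary observations drive everything. First, $\boldu=B(M,\boldq)$ lies in the same $V_\cS$-coset as $\boldw$: from $M\odot\boldu^\intercal=\boldq=M\odot\boldw^\intercal$ we get $M\odot(\boldu\oplus\boldw)^\intercal=\boldzero$, so $\boldu\oplus\boldw$ is in the right kernel $V_\cS$ of $M$, whence $\boldu\in V_\cS\oplus\boldw$; thus the single set $(V_\cS\oplus\boldw)\vert_{S_i}$ is what governs both $\boldu\vert_{S_i}$ and $\boldw\vert_{S_i}$. Second, restricting $V_\cS$ to one block $S_i$ yields exactly the constant vectors over $\bF_{2^m}$: since $\cS$ is a partition, $\1_{S_k}\vert_{S_i}$ is the zero vector of $\bF_{2^m}^{|S_i|}$ for $k\ne i$ while $\1_{S_i}\vert_{S_i}$ has every entry equal to the field identity, so $\bigl(\bigoplus_k\lambda_k\odot\1_{S_k}\bigr)\vert_{S_i}=\lambda_i\odot\1_{S_i}\vert_{S_i}=(\lambda_i,\dots,\lambda_i)$. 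Combining, $(V_\cS\oplus\boldw)\vert_{S_i}=\{\,\boldw\vert_{S_i}\oplus(\mu,\dots,\mu):\mu\in\bF_{2^m}\,\}$.

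Now I would invoke goodness and Lemma~\ref{lemma:fieldpartition}. Goodness gives some $j\in[q]$ with all entries of $\boldw\vert_{S_i}$ in the block $F_j$ of the field partition; by the construction in the proof of Lemma~\ref{lemma:fieldpartition}, $F_j$ is a coset $F_1\oplus\boldz$ of $F_1=\operatorname{Span}_{\bF_2}\{\1_{A_l}\}_{l=1}^p$ with $p=m-\log q$, and $F_j$ is spanned over $\bR$ by the $p$ pairwise-disjointly-supported, hence $\bR$-independent, vectors exhibited there, so $\dim\RSpan(F_j)=p$. The point is that adding a fixed $\mu\in\bF_{2^m}$ to every coordinate sends $F_j$ to $F_j\oplus\mu=F_1\oplus(\boldz\oplus\mu)$, again one coset of $F_1$, and therefore again contained in an $\bR$-subspace of dimension exactly $p$. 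Hence every vector in $(V_\cS\oplus\boldw)\vert_{S_i}$ has all of its entries confined to a $p$-dimensional real subspace of $\bR^m$, which is precisely the statement $\dim\bigl(\RSpan((V_\cS\oplus\boldw)\vert_{S_i})\bigr)=m-\log q$. The ``same row span'' corollary for $\boldu\vert_{S_i}$ and $\boldw\vert_{S_i}$ then follows from the second preliminary observation: the two segments differ only by a coordinatewise constant field-shift, so their $\{\pm1\}$-matrix representations differ by left multiplication by an invertible diagonal $\{\pm1\}$-matrix, which does not change the row span.

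The step I expect to cost the most care is the bilingual bookkeeping between $\bF_{2^m}$ and $\bR$: one must check that ``constant shift by $\mu$'' genuinely maps a coset of $F_1$ onto another coset of $F_1$ (rather than to a set outside the $F_1$-coset lattice), so that the dimension count of Lemma~\ref{lemma:fieldpartition} transfers verbatim, and one must pass between elements of $\bF_{2^m}^{|S_i|}$ and their $\{\pm1\}$-matrix representations consistently when converting the ``entries lie in a $p$-dimensional subspace'' conclusion into the dimension claim. Everything else --- unrolling the definitions of $V_\cS$, the parity-check relation $M\odot\boldw^\intercal=\boldq$, and of a good vector --- is routine.
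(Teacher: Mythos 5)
Your proposal is correct and follows essentially the same route as the paper's proof: reduce $(V_\cS\oplus\boldw)\vert_{S_i}$ to constant field-shifts $\boldw\vert_{S_i}\oplus\mu$ of a single segment, observe that such a shift acts on the $\{\pm1\}$-matrix representation as left multiplication by an invertible diagonal matrix (so all row spans coincide), and bound the dimension via goodness and Lemma~\ref{lemma:fieldpartition}. Your extra remark that $F_j\oplus\mu$ is again a coset of $F_1$ is a harmless reordering of the same ingredients, and your final step establishes the dimension claim at the same level of precision as the paper (namely, containment in an $(m-\log q)$-dimensional subspace).
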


\begin{proof}
    For every~$\boldu\in V_\cS \oplus \boldw$, we have that $$\boldu = \boldw \oplus \bigoplus_{i=1}^t(\ell_i \odot \1_{S_i})$$ for some unique $\ell_1, \ldots, \ell_t \in \bF_{2^m}$. Therefore, for all~$i\in[t]$, we have $$\boldu\vert_{S_i} = \boldw\vert_{S_i} \oplus \ell_i.$$ In other words, each entry of~$\boldu\vert_{S_i}$ is given by adding (in~$\bF_{2^m}$)~$\ell_i$ to the corresponding entry in~$\boldw\vert_{S_i}$. When referring to the respective~$\{\pm 1\}$-matrix representations~$U_i$ and $W_i\triangleq \boldW\vert_{S_i}$ of~$\boldu\vert_{S_i}$ and~$\boldw\vert_{S_i}$, respectively, this translates to
    \begin{align*}
        U_i=\diag(\ell_i) W_i.
    \end{align*}
    Therefore, since~$\diag(\ell_i)$ is an invertible matrix over~$\bR$, it follows that for every~$\boldu\in V_\cS\oplus \boldw$, $U_i$ has the same row span as~$W_i$. Hence, we have that~$$\RSpan((V_\cS\oplus \boldw)\vert_{S_i})=\RSpan(\boldw\vert_{S_i}).$$

    It remains to show that $\dim(\RSpan(\boldw\vert_{S_i}))=m-\log q$. Since the entries of $\boldw\vert_{S_i}$ all come from some subset $F_j$ of the field $\bF_{2^m}$ and are thus contained in an $(m - \log q)$-dimensional subspace over $\bR$, the claim follows.   
\end{proof}

Now we are ready to show that our protocol does indeed yield the desired outcome.

\begin{lemma}\label{lemma:algorithmcorrectness}
    The final result of the algorithm computed by the server is~$\boldW\boldx^\intercal$.
\end{lemma}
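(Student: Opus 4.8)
The plan is to unwind the server's final expression
$\sum_{i=1}^{t}\diag(\ell_i)Q_iR_i(\boldx\vert_{S_i})^\intercal$ step by step, leaning on the relations already established in (the proof of) Lemma~\ref{lemma:mminuslogq} together with the defining properties of the matrices $R_i$ and $Q_i$.

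First I would record that the matrices $Q_i$ required in step iii.b) actually exist: by construction $R_i$ consists of a maximal linearly independent (over $\bR$) set of rows of $U_i$, so every row of $U_i$ lies in the $\bR$-row span of $R_i$, and hence there is some $Q_i\in\bR^{m\times r_i}$ with $Q_iR_i=U_i$. Consequently the quantity the server forms from the user's answer satisfies $Q_iR_i(\boldx\vert_{S_i})^\intercal = U_i(\boldx\vert_{S_i})^\intercal$ for every $i\in[t]$, so the server's output equals $\sum_{i=1}^{t}\diag(\ell_i)U_i(\boldx\vert_{S_i})^\intercal$.

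Next I would invoke the identity $U_i=\diag(\ell_i)W_i$ with $W_i=\boldW\vert_{S_i}$, which is exactly the computation carried out in the proof of Lemma~\ref{lemma:mminuslogq}: it follows from $\boldu\vert_{S_i}=\boldw\vert_{S_i}\oplus\ell_i$ and the fact that, under the $\{\pm1\}$-representation of $\bF_{2^m}$, adding $\ell_i$ to each column is multiplication by $\diag(\ell_i)$ over $\bR$. Since $\ell_i$ corresponds to a $\{\pm1\}$-vector of length $m$, we have $\diag(\ell_i)^2=I_m$, so $\diag(\ell_i)U_i=\diag(\ell_i)^2W_i=W_i$. Hence the server's output collapses to $\sum_{i=1}^{t}W_i(\boldx\vert_{S_i})^\intercal=\sum_{i=1}^{t}(\boldW\vert_{S_i})(\boldx\vert_{S_i})^\intercal$.

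Finally, because $\cS=\{S_1,\dots,S_t\}$ is a partition of $[n]$, splitting the matrix–vector product $\boldW\boldx^\intercal$ by columns over the blocks $S_i$ gives $\sum_{i=1}^{t}(\boldW\vert_{S_i})(\boldx\vert_{S_i})^\intercal=\boldW\boldx^\intercal$, which finishes the proof. I do not anticipate a genuine obstacle; the only points needing care are that the $\ell_i$'s used by the server in step i.e) are the same ones relating $\boldu$ to $\boldw$ (so that $U_i=\diag(\ell_i)W_i$ is the correct relation), and that the user and server compute identical $R_i$ — both follow from $B$ being deterministic and from the lexicographic tie-breaking rule built into the definition of $R_i$.
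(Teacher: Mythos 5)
Your proof is correct and takes essentially the same route as the paper's: both rest on the identity $U_i=\diag(\ell_i)W_i$ established in the proof of Lemma~\ref{lemma:mminuslogq}, the involution $\diag(\ell_i)^2=I_m$ coming from the $\{\pm1\}$-representation, and the column-block decomposition of $\boldW\boldx^\intercal$ over the partition. You merely make explicit two points the paper leaves implicit — the existence of $Q_i$ with $Q_iR_i=U_i$ and the agreement of the $R_i$'s computed by the user and the server — which is a harmless elaboration.
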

\begin{proof}
    From the proof of Lemma \ref{lemma:mminuslogq}, it follows that $U_i = \diag(\ell_i) W_i$ for all~$i \in [t]$. Recall that $\ell_1, \ldots, \ell_t \in \bF_{2^m}$. This means that $$\diag(\ell_i)Q_iR_i = \diag(\ell_i) U_i = W_i$$ for every $i$. Therefore, 
    \begin{align*}
        \sum_{i=1}^{t}\diag(\ell_i)Q_iR_i(\boldx\vert_{S_i})^\intercal &= \sum_{i=1}^{t} W_i(\boldx\vert_{S_i})^\intercal \\ &= \sum_{i=1}^{t} (\boldW\vert_{S_i})(\boldx\vert_{S_i})^\intercal = \boldW\boldx^\intercal. \qedhere
    \end{align*}
\end{proof}

\subsection{Merit of the protocol}

\begin{theorem}\label{theorem:jointprotocolmerit}
    For any $t \in [n]$ and any integer power of two $q\le \min\{t,2^{m-1}\}$, %such that $t \ge q$ and $q = 2^k \le 2^m$ for some non-negative integer $k$, 
    the above joint retrieval protocol has user-privacy $\ell = t(m - \log q)$, server-privacy $I(W; Q) = m(n - t)$ and publication cost $d \le (n-t) \log t + nH(\frac{t}{n}) + m(n-t)$, where $H$ is the binary entropy function.
\end{theorem}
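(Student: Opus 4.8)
The plan is to verify the three quantities in turn, leaning on the structural facts already established for the protocol.

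\emph{User-privacy.} In step ii)(d) the user transmits $\{R_i(\boldx\vert_{S_i})^\intercal\}_{i=1}^t$, i.e.\ $\sum_{i=1}^t r_i$ real inner products of $\boldx$ with the rows of the $R_i$'s (extended by zeros outside $S_i$). By construction $R_i$ is a maximal $\bR$-independent set of rows of $U_i$, so $r_i=\mathrm{rank}_{\bR}(U_i)$; and since $U_i$ is the $\{\pm1\}$-matrix representation of $\boldu\vert_{S_i}$ with $\boldu\in V_\cS\oplus\boldw$, the proof of Lemma~\ref{lemma:mminuslogq} gives $\mathrm{rank}_{\bR}(U_i)=m-\log q$. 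As the blocks $S_1,\dots,S_t$ are disjoint, rows originating from different blocks have disjoint supports, so the full family of $\sum_i r_i$ functionals is $\bR$-independent and $\ell=\dim(\RSpan(\{\boldv_i\}))=t(m-\log q)$.

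\emph{Server-privacy.} Write $I(W;Q)=H(W)-H(W\vert Q)$. Since $W=\mathrm{Unif}(\bF_{2^m}^n)$ we have $H(W)=mn$. The query discloses the partition $\cS$ (hence $M$) and the syndrome $\boldq=M\odot\boldw^\intercal$, so conditioned on $Q=(\cS,\boldq)$ the weight vector $\boldw$ lies in the coset $\{\boldy:M\odot\boldy^\intercal=\boldq\}=V_\cS\oplus\boldu$, which has exactly $|V_\cS|=|\bF_{2^m}|^t=2^{mt}$ elements because the characteristic vectors $\1_{S_i}$ are $\bF_{2^m}$-independent. The point is that conditioned on $Q$ the residual distribution of $\boldw$ is \emph{uniform} over this whole coset: translating a block $\boldw\vert_{S_i}$ by a field element $\ell_i$ sends $F_{g(i)}$ to another coset of $F_1$, hence to some $F_j$, so every member of $V_\cS\oplus\boldw$ is again good for $\cS$. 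Therefore $H(W\vert Q)=mt$ and $I(W;Q)=m(n-t)$.

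\emph{Publication cost.} The query is an encoding of $\cS$ followed by an encoding of $\boldq\in\bF_{2^m}^{n-t}$. The syndrome costs $m(n-t)$ bits, one length-$m$ $\{\pm1\}$-vector per coordinate. For $\cS$, transmit the $t$-subset of block-minima of $[n]$, at a cost of at most $\log\binom{n}{t}\le nH(t/n)$ bits by the standard estimate $\binom{n}{t}\le 2^{nH(t/n)}$, followed by the block-index in $[t]$ of each of the remaining $n-t$ elements, costing $\log t$ bits apiece; the decoder reconstructs $\cS$ from this. Summing the three terms gives $d\le (n-t)\log t+nH(t/n)+m(n-t)$ for every realization.

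I expect the delicate step to be the server-privacy computation: one must ensure that the server's deterministic rule for selecting a good partition is compatible with the coset structure, so that conditioning on $(\cS,\boldq)$ really leaves $W$ uniform over the \emph{entire} coset $V_\cS\oplus\boldu$ rather than over a proper subset of it. The observation that translation by a field element permutes $F_1,\dots,F_q$ — hence that each coset of $V_\cS$ is wholly good or wholly bad for $\cS$ — is precisely the homogeneity fact needed here. The user-privacy and publication-cost parts are then routine bookkeeping on top of Lemma~\ref{lemma:mminuslogq} and the binomial bound.
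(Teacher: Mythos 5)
Your proof is correct and follows essentially the same route as the paper's: Lemma~\ref{lemma:mminuslogq} for user-privacy, coset counting for server-privacy, and the $m(n-t)$-bit syndrome plus a count of partitions for publication cost (the paper bounds the Stirling number $S(n,t)\le\frac{1}{2}\binom{n}{t}t^{n-t}$ where you give an explicit encoding; both yield the same bound). Your explicit observation that each coset of $V_\cS$ is wholly good for $\cS$ --- and your flag that the server's deterministic partition-selection rule must be coset-compatible for $W\vert Q=\boldq$ to be uniform over the \emph{entire} coset rather than a proper subset --- addresses a point the paper's proof leaves implicit.
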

\begin{proof}
    The server needs to communicate the identity of the coset containing $\boldw$, in the form of the query $\boldq$, as well as the identity of the partition $\cS$. Publishing the query requires $m(n - t)$ bits. The number of partitions of $[n]$ into $t$ pairwise disjoint subsets equals the Stirling number of the second kind, denoted by $S(n, t)$, which is upper bounded by $\frac{1}{2}\binom{n}{t}t^{n-t}$~\cite{rennie1969stirling}. Thus,
    \begin{align*}
        d &= m(n - t) + \log S(n, t) \\ &\le m(n - t) + \log \frac{1}{2}\binom{n}{t}t^{n-t} \\ &\le m(n-t) + (n-t) \log t + nH(\frac{t}{n}).
    \end{align*}
    
    Recall that for every $i \in [t]$, the matrix~$R_i$ found by the user is of dimension $r_i \times |S_i|$, and from Lemma~\ref{lemma:mminuslogq} it follows that~$r_i = m - \log q$. Therefore, the dimension of the subspace on which~$\boldx$ is revealed is at most~$t(m - \log q)$; that is, user-privacy~$\ell = t(m - \log q)$. 

    It remains to show that $I(W; Q) = m(n - t)$. Since $$I(W; Q) = H(W) - H(W|Q) = mn - H(W|Q),$$ it suffices to show that $H(W|Q) = mt$. Notice that disclosing $Q = \boldq$ reveals the identity of the coset of $V_\cS$ to which $\boldw$ belongs. Since $\boldw$ is uniformly distributed over $\bF_{2^m}^n$, and because cosets of $V_\cS$ are of size $2^{mt}$, it follows that $W|Q = \boldq$ is uniform over a set of size $2^{mt}$, and therefore $H(W|Q = \boldq) = mt$. Hence, \begin{align*}
        H(W|Q) &= \sum_{\boldq}\Pr(\boldq)H(W|Q=\boldq) \\ &= H(W|Q = \boldq) = mt. \qedhere
    \end{align*}
\end{proof}

In Section~\ref{section:perfectptimality}, it is shown that the above joint retrieval protocol attains optimal privacy tradeoff for all~$t \ge 2^{m-1}$ by setting~$q=2^{m-1}$. However, the publication cost is not optimal.

% \red{Discussion: Move to section 7?}

% \blue{Therefore, one should choose $q$ as the largest power of two such that $q \le t$.}

% \blue{Discussion paragraph: How to choose~$q$, and a focus on a range of parameters in which are results are impressive. This should be in the case where~$2^{m-1}\le t$. Discuss bounds.}

% The fact that user-privacy is upper bounded by $t(m - \log q)$ means that one should choose $q$ to be as large as possible. Specifically, suppose $2^{m-1} \le t$. Setting $q = 2^{m-1}$ allows us to potentially lower user-privacy to at most $t$, which is equal to the user-privacy achieved in the single retrieval problem introduced previously. Therefore, under this parameter regime, transitioning from the single retrieval problem to the more general joint retrieval case does not result in any additional leakage of information from the user. What is more, because we have the following trade-off between user-privacy and server-privacy: $I(W; Q) + m \ell \ge mn$, choosing $q = 2^{m-1}$ achieves this trade-off exactly. In other words, our scheme is in this case optimal in its privacy guarantees.

% As for publication cost, $d$ is in fact a random variable that depends on the value of $Q$. Therefore, the problem of minimizing the expected value of $d$ is a source coding problem, and the expected value of $d$ is lower bounded by $H(Q) = I(Q; W) + H(Q|W) = I(W; Q)$. Because $d > m(n - t)$ in the above protocol, our scheme is not exactly optimal in terms of publication cost, with the additional cost coming from the communication of the proper partition of $[n]$.

\section{Perfect and hard set PI problems}\label{section:perfectandhard}

%\subsection{Overview}

This section studies private inference problems where the values of the weight vector are taken from two special kinds of finite sets of real numbers, \textit{perfect sets} and \textit{hard sets}. In simple terms, a set of coefficients~$A$ is called \textit{perfect} if it allows for the most concise representation as a linear combination of~$\{\pm 1\}$ vectors. Conversely, it is termed \textit{hard} if any such representation includes a maximum number of~$\{\pm 1\}$ vectors. The sense in which such representation is minimal or maximal is captured by a newly defined notion called \textit{coefficient complexity}.

Coefficient complexity is an additive combinatorial notion of complexity of a finite subset of $\bR$, recently developed in \cite{ramkumar2023coefficient}. We will subsequently define what a \textit{dictionary} is for a finite set based on its coefficient complexity. Before doing so, we introduce some notations used in this section.

In this section, we denote $\1$ as the all $1$'s vector over $\bR$. For a finite set $A$, we denote $\operatorname{vec}(A)$ as the column vector of the elements of~$A$ in increasing order. We denote $\operatorname{sum}(A)$ as the sum of the elements of $A$. We use $\operatorname{colspan}_{\bR}(M)$ for denoting the column span of matrix~$M$ over $\bR$.

We use standard notation for the addition of sets. For any two sets $A, B \subseteq \bR$, $A+B \triangleq \{a+b\vert a\in A, b\in B\}$, and we use $\addsets_i A_i$ for addition of multiple sets $A_i$. Furthermore, we employ the notations $A+\gamma \triangleq \{a+\gamma\vert a\in A\}$ and $\gamma A \triangleq \{\gamma a\vert a\in A\}$ for $\gamma \in \bR$. 

\subsection{Coefficient complexity, dictionaries, and perfect and hard sets}

The authors of \cite{ramkumar2023coefficient} developed the definition of \textit{coefficient complexity} for a finite set as follows:

\begin{definition}\cite{ramkumar2023coefficient}\label{def:complexity}
    For any finite set $A \subset \bR$, the coefficient complexity of $A$, denoted by $C(A)$, is the smallest positive integer $\theta$ for which there exist $\theta + 1$ not necessarily distinct real numbers $\lambda_0, \lambda_1, \ldots, \lambda_{\theta}$ such that
    \begin{align*}
        A \subseteq \addsets_{i=1}^{\theta} \lambda_i \{1, -1\} + \lambda_0.
    \end{align*}
    
\end{definition}

\begin{example}\label{example:complexity}
    From the definition above, $C(\{-3, -1, 1, 3\}) = 2$ as $$\{-3, -1, 1, 3\} = -2\{1, -1\} + \{1, -1\}.$$ In comparison, consider $\{-2, 0, 1, 2\}$. There exist no three real coefficients $\lambda_0, \lambda_1, \lambda_2$ such that $\{-2, 0, 1, 2\} \subseteq \lambda_1\{1, -1\} + \lambda_2\{1, -1\} + \lambda_0$. On the other hand, $$\{-2, 0, 1, 2\} \subseteq -\frac{1}{4}\{1, -1\}+\frac{3}{4}\{1, -1\}+\frac{5}{4}\{1, -1\}+\frac{1}{4}.$$ Therefore, $C(\{-2, 0, 1, 2\}) = 3$.
\end{example}

This notion of complexity quantifies the extent to which the vector of coefficients can be expressed as a linear combination of vectors over~$\{\pm1\}$. To this end, we establish the notion of a \textit{dictionary} for a finite set $A$.

\begin{definition}\label{definition:dictionary}
    For any finite set $A \subset \bR$ with $C(A) = \theta$, a dictionary for $A$ is a matrix $D_A \in \{\pm 1\}^{|A| \times (\theta+1)}$ that contains a column of all $1$'s and satisfies $\operatorname{vec}(A) \in \operatorname{colspan}_{\bR}(D_A)$.  
\end{definition}
\begin{example}\label{example:dictionary}
    %For the set $\{-3,-1,1,3\}$, \begin{align*}
    Consistent with Example~\ref{example:complexity}, the matrix $$\begin{bmatrix}
        1 & 1 & -1\\
        1 & 1 & 1\\
        1 & -1 & -1\\
        1 & -1 & 1
    \end{bmatrix}$$is a dictionary for the set~$\{-3,-1,1,3\}$, since
    \begin{align*}
    \begin{bmatrix}
        -3 \\
        -1 \\
        1 \\
        3 
    \end{bmatrix} = 0 \cdot \begin{bmatrix}
        1 \\
        1 \\
        1 \\
        1 
    \end{bmatrix} -2 \cdot \begin{bmatrix}
        1 \\
        1 \\
        -1 \\
        -1 
    \end{bmatrix} + 1 \cdot \begin{bmatrix}
        -1 \\
        1 \\
        -1 \\
        1 
    \end{bmatrix}.\end{align*}
    The matrix $$\begin{bmatrix}
        1 & 1 & -1 & -1\\
        1 & -1 & 1 & -1\\
        1 & -1 & -1 & 1\\
        1 & 1 & 1 & 1
    \end{bmatrix}$$is a  dictionary for the set~$\{-2,0,1,2\}$, since
    \begin{align*}
    \begin{bmatrix}
        -2 \\
        0 \\
        1 \\
        2 
    \end{bmatrix} = \frac{1}{4} \cdot \begin{bmatrix}
        1 \\
        1 \\
        1 \\
        1 
    \end{bmatrix} -\frac{1}{4} \cdot \begin{bmatrix}
        1 \\
        -1 \\
        -1 \\
        1 
    \end{bmatrix} + \frac{3}{4} \cdot \begin{bmatrix}
        -1 \\
        1 \\
        -1 \\
        1 
    \end{bmatrix} + \frac{5}{4} \cdot \begin{bmatrix}
        -1 \\
        -1 \\
        1 \\
        1 
    \end{bmatrix}.\end{align*}
    %and $\{-2, 0, 1, 2\}$ from Example~\ref{example:complexity} have the following dictionaries, respectively:
    
    % \begin{align*}
    %     \begin{bmatrix}
    %     1 & 1 & 1\\
    %     1 & -1 & 1\\
    %     1 & 1 & -1\\
    %     1 & -1 & -1
    % \end{bmatrix}, 
    % \begin{bmatrix}
    %     1 & 1 & -1 & -1\\
    %     1 & -1 & 1 & -1\\
    %     1 & -1 & -1 & 1\\
    %     1 & 1 & 1 & 1
    % \end{bmatrix}.
    %\end{align*} is a dictionary, and for the set $\{-2, 0, 1, 2\}$, \begin{align*}        
    %\end{align*} is a dictionary. \red{Do we need to explicit show $vec(A) \in colspan_{\bR}(D_A)$ in these examples? KD}
    % \end{align*}
    %\blue{It is easy to check that~$\lambda_0 = 0$,~$\lambda_1 = -2$ and~$\lambda_2 = 1$ work for~$\{-3, -1, 1, 3\}$, and~$\lambda_0 = \frac{1}{4}$,~$\lambda_1 = -\frac{1}{4}$,~$\lambda_2 = \frac{3}{4}$, and~$\lambda_3 = \frac{5}{4}$ work for $\{-3, -1, 1, 3\}$, consistent with Example~\ref{example:complexity}. \red{Expand}}
\end{example}
In addition, \cite{ramkumar2023coefficient} provided the following upper and lower bounds for coefficient complexity. 

\begin{lemma}\cite{ramkumar2023coefficient}\label{lemma:complexitybounds}
    For any finite set $A$ of size at least two, 
    \begin{align*}
        \left\lceil\log |A|\right\rceil \le C(A) \le |A| - 1.
    \end{align*}
\end{lemma}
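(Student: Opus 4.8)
The plan is to establish the two inequalities independently: the lower bound by a cardinality count on the set $\addsets_{i=1}^{\theta}\lambda_i\{1,-1\}+\lambda_0$, and the upper bound by exhibiting an explicit invertible $\{\pm1\}$ matrix with an all-ones column.

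For the lower bound, suppose $C(A)=\theta$, so that $A\subseteq\addsets_{i=1}^{\theta}\lambda_i\{1,-1\}+\lambda_0$ for suitable reals $\lambda_0,\ldots,\lambda_\theta$. Every element of the right-hand side has the form $\lambda_0+\sum_{i=1}^{\theta}\varepsilon_i\lambda_i$ with each $\varepsilon_i\in\{\pm1\}$, so the right-hand side is a set of at most $2^{\theta}$ elements. Hence $|A|\le 2^{\theta}$, and since $\theta$ is an integer this forces $\theta\ge\lceil\log|A|\rceil$.

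For the upper bound, write $k\triangleq|A|$. By Definition~\ref{def:complexity} it suffices to produce reals $\lambda_0,\ldots,\lambda_{k-1}$ with $A\subseteq\addsets_{i=1}^{k-1}\lambda_i\{1,-1\}+\lambda_0$, and this follows once we find a matrix $M\in\{\pm1\}^{k\times k}$ that contains an all-$1$'s column and satisfies $\operatorname{vec}(A)\in\operatorname{colspan}_{\bR}(M)$. I would take $M$ with $M_{i,1}=1$ for all $i$, and, for $j\ge 2$, $M_{i,j}=1$ if $j\le i$ and $M_{i,j}=-1$ otherwise (a $\{\pm1\}$ "lower-triangular" matrix whose first column is all $1$'s). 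This $M$ is invertible over $\bR$: subtracting row $i-1$ from row $i$ yields $2\bolde_i$ for $i=2,\ldots,k$, and these together with row $1$ span $\bR^k$. Consequently $\operatorname{colspan}_{\bR}(M)=\bR^k\ni\operatorname{vec}(A)$, and writing $M^{-1}\operatorname{vec}(A)=(\lambda_0,\lambda_1,\ldots,\lambda_{k-1})^\intercal$, the $i$-th entry of $\operatorname{vec}(A)$ equals $\lambda_0+\sum_{j=1}^{k-1}M_{i,j+1}\lambda_j\in\addsets_{j=1}^{k-1}\lambda_j\{1,-1\}+\lambda_0$ since each $M_{i,j+1}\in\{\pm1\}$. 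Therefore $C(A)\le k-1$.

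The argument is elementary throughout. The only step requiring a small idea is the explicit construction of an invertible $\{\pm1\}$ matrix carrying an all-ones column for the upper bound, and the one thing to double-check there is that the indicated row operations genuinely certify invertibility; the lower bound is a pure counting bound and poses no real obstacle. (One may also note in passing that $\lceil\log|A|\rceil\le|A|-1$ for $|A|\ge2$, so the two bounds are mutually consistent.)
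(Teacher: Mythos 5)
Your proof is correct. Note that the paper states this lemma only as a citation to \cite{ramkumar2023coefficient} and supplies no proof of its own, so there is nothing internal to compare against; your argument is a valid self-contained derivation. The lower bound is exactly the obvious counting argument ($\addsets_{i=1}^{\theta}\lambda_i\{1,-1\}+\lambda_0$ has at most $2^{\theta}$ elements, so $|A|\le 2^{\theta}$ and integrality of $\theta$ gives the ceiling). For the upper bound, your ``staircase'' matrix does work: row $i$ minus row $i-1$ equals $2\bolde_i$ for $i=2,\ldots,k$ (the two rows agree in every column except column $i$, where the entries are $1$ and $-1$), and together with row $1$, whose first entry is nonzero, these span $\bR^k$, so $M$ is invertible and $\operatorname{vec}(A)=M\boldsymbol{\lambda}$ has a solution whose entries serve as the required coefficients. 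In the paper's own vocabulary you have simply exhibited an explicit $|A|\times|A|$ dictionary (in the sense of Definition~\ref{definition:dictionary}) that exists for \emph{every} finite set; the Hadamard matrix $H_{2^m}$ used in Lemma~\ref{lemma:hadamardclaims} plays the same role for sets of size $2^m$, whereas your triangular construction has the minor advantage of working for arbitrary $|A|$ without any power-of-two restriction.
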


We will focus on sets of the lowest and highest coefficient complexities and formally define \textit{perfect and hard sets} as follows.

\begin{definition}
    Any set $A \subset \bR$ of size $2^m$, where $m$ is a positive integer, is called a perfect set if $C(A) = \log |A|$ and $\operatorname{sum}(A) = 0$.
\end{definition}

It is easy to check that~$\{-3, -1, 1, 3\}$ is a perfect set. Perfect sets are interesting because the~$m$-bit midriser quantizer~\cite{gersho1978principles}, which is commonplace in communication and signal processing systems, results in a perfect set of size~${2^m}$. Based on the above definition, the following equivalent characterization of a perfect set is immediate.

\begin{lemma}
    Every perfect set $A$ can be written as $A = \addsets_{i=1}^{\log |A|} \lambda_i \{1, -1\}$ for some real numbers $\lambda_1, \ldots, \lambda_{\log |A|}$.
\end{lemma}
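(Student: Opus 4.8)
The plan is to prove the two directions of this characterization, which is essentially unwinding the definitions of \emph{perfect set} and \emph{coefficient complexity}.

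First, I would show that a perfect set $A$ admits a representation of the desired form. By definition of a perfect set, $C(A) = \log|A| =: m$, so by Definition~\ref{def:complexity} there exist real numbers $\lambda_0, \lambda_1, \ldots, \lambda_m$ with $A \subseteq \addset_{i=1}^{m} \lambda_i\{1,-1\} + \lambda_0$. Now observe that the right-hand side is a set of at most $2^m = |A|$ elements; since $A$ already has $2^m$ elements, the inclusion must in fact be an equality, and moreover all $2^m$ sign combinations must yield distinct values. The remaining task is to show that the condition $\operatorname{sum}(A) = 0$ forces $\lambda_0 = 0$. This is the crux: summing all $2^m$ elements of $\addset_{i=1}^m \lambda_i\{1,-1\} + \lambda_0$, each $\lambda_i\{1,-1\}$ contributes $+\lambda_i$ half the time and $-\lambda_i$ the other half, so the $\lambda_i$ terms cancel in pairs, leaving $\operatorname{sum}(A) = 2^m \lambda_0$. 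Setting this to zero gives $\lambda_0 = 0$, hence $A = \addset_{i=1}^{m} \lambda_i\{1,-1\}$.

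Conversely (if the lemma is intended as an equivalence), I would note that any set of the form $A = \addset_{i=1}^{m}\lambda_i\{1,-1\}$ with $|A| = 2^m$ satisfies $C(A) \le m$ directly from Definition~\ref{def:complexity} (take $\lambda_0 = 0$), and $C(A) \ge \lceil \log|A|\rceil = m$ by the lower bound in Lemma~\ref{lemma:complexitybounds}, so $C(A) = \log|A|$; and the symmetric-sum cancellation argument above shows $\operatorname{sum}(A) = 0$. Thus $A$ is perfect.

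I expect the main (minor) obstacle to be handling the size bookkeeping carefully: one must argue that the inclusion $A \subseteq \addset_{i=1}^m \lambda_i\{1,-1\}+\lambda_0$ together with $|A| = 2^m$ and $\left|\addset_{i=1}^m \lambda_i\{1,-1\}+\lambda_0\right| \le 2^m$ forces equality, so that the sum of elements of $A$ really equals the sum over all $2^m$ sign patterns (if the inclusion were strict, the cancellation computation would not directly apply). Once that point is made, the cancellation of the $\pm\lambda_i$ pairs is routine, and the conclusion $\lambda_0 = 0$ follows immediately.
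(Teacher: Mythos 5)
Your proposal is correct and follows essentially the same route as the paper's proof: invoke $C(A)=\log|A|$, upgrade the inclusion to an equality by comparing $|A|=2^m$ with the at-most-$2^m$ elements of the sumset, and use the $\pm\lambda_i$ cancellation to conclude $\operatorname{sum}(A)=2^m\lambda_0$, hence $\lambda_0=0$. Your explicit handling of the size bookkeeping is slightly more careful than the paper's one-line version, and the converse direction you sketch is not required by the statement (which only asserts one implication) but is harmless.
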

\begin{proof}
    Because the set $A$ is perfect, $C(A) = \log |A|$. The equality follows from the fact that there are exactly $|A|$ elements in $\addsets_{i=1}^{\log |A|} \lambda_i \{1, -1\} + \lambda_0$. Finally, since the sum of elements in $ \addsets_{i=1}^{\log |A|} \lambda_i \{1, -1\}$ is zero, it follows that $\operatorname{sum}(A) = 0$ if and only if $\lambda_0 = 0$.
\end{proof}

\begin{definition}
     Any set $A \subset \bR$ of size $2^m$, where $m$ is a positive integer, is called a hard set if $C(A) = |A| - 1$.
\end{definition}
It immediately follows that if $A$ is a hard set, then $$A \subseteq \addsets_{i=1}^{|A| - 1} \lambda_i \{1, -1\} + \lambda_0$$ for some $\lambda_0, \lambda_1, \ldots, \lambda_{|A| - 1} \in \bR$. It is shown in \cite{ramkumar2023coefficient} that most sets are hard.

Next, we investigate PI problems where the entries of weight vector $\boldw$ belong to either a perfect set or a hard set. 

\subsection{The perfect set PI problem}\label{section:perfectset}

Every perfect set of size~$2^m$ can be written as $$A = \addsets_{i=1}^{m} \lambda_i \{1, -1\}.$$ It follows that any $\boldw \in A^n$ can be written as $$\boldw = \sum_{i=1}^{m} \lambda_i \boldw^{(i)},$$ where $\boldw^{(i)} \in \{\pm 1\}^n$ for every~$i$. Since~$\boldw$ is randomly chosen from a uniform distribution $W = \textrm{Unif}(A^n)$, the resulting distributions $\boldw^{(1)}, \ldots, \boldw^{(m)}$ are mutually independent and each is uniform over~$\{\pm1\}^n$.
%we can view $\boldw^{(1)}, \ldots, \boldw^{(m)}$ as being chosen independently at random from $\textrm{Unif}(\{\pm 1\}^n)$. 
This follows from the observation that if $D_A$ is a dictionary for perfect set $A$, then the $2^m \times m$ matrix obtained by removing the all $1$'s column from $D_A$ contains all possible $\{\pm 1\}^m$ vectors as rows; this enables one to run~$m$ independent instances of retrieving one signal without any loss of privacy.
% It is also straightforward to see that a valid dictionary for $A$ is the matrix $D'_A$ of size $2^m \times m$ whose rows contain all vectors in $\{\pm 1\}^{m}$, concatenated with a column of all $1$'s on the left, i.e., $D_A = [1|D'_A] \in \{\pm 1\}^{2^m \times (m + 1)}$.

On that note, we now demonstrate that the joint retrieval approach in Section \ref{section:jointprotocol} leads to a protocol for this problem. The joint retrieval protocol can be used to obtain $m$ inner products with the data at the same time. After obtaining the $m$ individual signals $\{\boldw^{(i)}\boldx^\intercal\}_{i=1}^m$, the server linearly combines them using the coefficients $\{\lambda_i\}_{i=1}^{m}$ to compute $$\boldw\boldx^\intercal = \sum_{i=1}^{m} \lambda_i \boldw^{(i)}\boldx^\intercal.$$ For any $t \in [n]$ and any integer power of two $q\le \min\{t,2^{m-1}\}$, this results in publication cost $d \le (n-t) \log t + nH(\frac{t}{n}) + m(n-t)$, user-privacy $\ell = t(m - \log q)$, and server-privacy $I(W; Q) = m(n - t)$, due to Theorem \ref{theorem:jointprotocolmerit}. In Section~\ref{section:perfectptimality}, we show that for $t \ge 2^{m-1}$, this perfect set protocol with $q=2^{m-1}$ is optimal in terms of the privacy tradeoff.

\subsection{The hard set PI problem}
We now turn to hard sets: $$A \subseteq \addsets_{i=1}^{2^m - 1} \lambda_i \{1, -1\} + \lambda_0,$$ with $|A| = 2^m$ for some positive integer $m$, and the server holds $\boldw \in A^n$, randomly chosen from a uniform distribution $W = \textrm{Unif}(A^n)$. Any $\boldw \in A^n$ can be written as $$\boldw = \sum_{i=1}^{2^{m}-1} \lambda_i \boldw^{(i)} + \lambda_0 \1.$$ Here, $\boldw^{(1)}, \ldots, \boldw^{(2^{m}-1)}$ are not necessarily uniform and independent. As a result, we cannot directly apply the joint retrieval approach to this problem.

In what follows, we employ Hadamard matrices, specifically those under Sylvester's construction \cite{horadam2012hadamard}. We recursively construct Hadamard matrices of order $2^m$ for every non-negative integer~$m$ as follows, with base case~$H_1=[1]$:
\begin{align*}
    % H_1 &= [1], \\
    % H_2 &= \begin{bmatrix}
    %     1 & 1 \\
    %     1 & -1
    % \end{bmatrix}, \\
    H_{2^m} &= \begin{bmatrix}
        H_{2^{m-1}} & H_{2^{m-1}} \\
        H_{2^{m-1}} & -H_{2^{m-1}}
    \end{bmatrix}.
\end{align*}
\subsubsection{Technical lemmas}
Before introducing the protocol for the hard set PI problem, we remind the reader that~$H_{2^m}H_{2^m}^\intercal=2^mI$ and~$H_{2^m}=H_{2^m}^\intercal$ for all~$m$. 
%present some technical results about Hadamard matrices. First we note that for any Hadamard matrix $H$ of order $2^m$, $HH^\intercal = 2^m I$. Also, Hadamard matrices are symmetric under Sylvester's construction. 
We will also show among other results about Hadamard matrices that they are valid candidates for dictionaries for hard sets.

\begin{lemma}\label{lemma:hadamardclaims}
    For any positive integer $m$, we have the following:
    \begin{enumerate}[label=(\roman*)]
        \item The matrix~$H_{2^m}$ is a dictionary for any hard set of size~$2^m$.
        \item There exists a~$2^m \times m$ submatrix of~$H_{2^m}$ whose rows are precisely all vectors in~$\{\pm 1\}^m$.
        \item The element-wise product of any two (not necessarily distinct) rows of~$H_{2^m}$ is a row of the same matrix.
    \end{enumerate}
    %\red{[Consider uniting (ii) and (iii) by simply stating that the columns of~$H_{2^m}$ are an~$\bF_2$-linear code of dimension~$m$. NR]}
\end{lemma}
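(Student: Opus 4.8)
The plan is to prove the three parts more or less independently, using only Sylvester's recursive construction and the two stated identities $H_{2^m}H_{2^m}^\intercal = 2^m I$ and $H_{2^m} = H_{2^m}^\intercal$.

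For part~(i), recall that by Definition~\ref{definition:dictionary} it suffices to exhibit a column of all $1$'s in $H_{2^m}$ and to show that $\operatorname{vec}(A)$ lies in $\operatorname{colspan}_{\bR}(H_{2^m})$ for an arbitrary hard set $A$ of size $2^m$. The all-$1$'s column is immediate from Sylvester's construction by a one-line induction on $m$ (the first column of $H_{2^m}$ is the first column of $H_{2^{m-1}}$ stacked on itself). For the column-span condition, I would use the orthogonality relation: since $H_{2^m}H_{2^m}^\intercal = 2^m I$, the matrix $H_{2^m}$ is invertible over $\bR$, so $\operatorname{colspan}_{\bR}(H_{2^m}) = \bR^{2^m}$, which trivially contains $\operatorname{vec}(A)$. (In fact the hardness of $A$ — namely $C(A) = |A|-1 = 2^m - 1$ — is what forces the dictionary to have $2^m$ columns, matching the size of $H_{2^m}$; this is where Lemma~\ref{lemma:complexitybounds} and Definition~\ref{def:complexity} enter, tying the dimensions together.) I expect this part to be routine.

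For part~(ii), I would proceed by induction on $m$, guided by the column ordering in Sylvester's construction. For $m=1$, $H_2 = \begin{bmatrix} 1 & 1 \\ 1 & -1\end{bmatrix}$ and its second column has rows $\{1,-1\} = \{\pm1\}^1$. For the inductive step, suppose $H_{2^{m-1}}$ has an $m-1$ column submatrix, indexed by a set $J$ of columns, whose $2^{m-1}$ rows are exactly $\{\pm1\}^{m-1}$. In $H_{2^m} = \begin{bmatrix} H_{2^{m-1}} & H_{2^{m-1}} \\ H_{2^{m-1}} & -H_{2^{m-1}}\end{bmatrix}$, take the columns indexed by $J$ (in the left block) together with the last column (which is $\1$ stacked on $-\1$). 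The resulting $2^m \times m$ submatrix has top half $[\,\text{rows of } \{\pm1\}^{m-1}\, |\, 1\,]$ and bottom half $[\,\text{rows of } \{\pm1\}^{m-1}\, |\, -1\,]$, and these $2^m$ rows are exactly $\{\pm1\}^m$. The main subtlety here is being careful about which columns of the right block get negated and verifying that the chosen column indices are consistent across the recursion; I would state the induction hypothesis with the explicit column index set to make this bookkeeping clean.

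For part~(iii), I would again induct on $m$. The base case $H_1 = [1]$ is trivial. Write two rows of $H_{2^m}$ in block form according to whether their index lies in the top half or the bottom half. If rows $a$ and $b$ are both in the top half, each has the form $(\boldr, \boldr)$ with $\boldr$ a row of $H_{2^{m-1}}$, so their element-wise product is $(\boldr_a \ast \boldr_b, \boldr_a \ast \boldr_b)$, which by the inductive hypothesis equals $(\boldr', \boldr')$ for some row $\boldr'$ of $H_{2^{m-1}}$ — hence a top-half row of $H_{2^m}$. The case where both are in the bottom half gives $(\boldr_a, -\boldr_a)\ast(\boldr_b, -\boldr_b) = (\boldr_a\ast\boldr_b, \boldr_a\ast\boldr_b)$, again a top-half row. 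The mixed case gives $(\boldr_a, \boldr_a)\ast(\boldr_b, -\boldr_b) = (\boldr_a\ast\boldr_b, -\boldr_a\ast\boldr_b)$, a bottom-half row. In all cases the product is a row of $H_{2^m}$. The only thing to watch is that the case analysis on top/bottom halves is exhaustive and that the sign in the mixed case is tracked correctly; there is no real obstacle here, just careful casework. Of the three parts I anticipate part~(ii) to be the one requiring the most care, because of the need to name and propagate the column index set through the induction, but none of the three steps should present a genuine difficulty.
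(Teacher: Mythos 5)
Your overall approach coincides with the paper's: part~(i) via the orthogonality relation $H_{2^m}H_{2^m}^\intercal = 2^mI$ (the paper just writes the coefficients explicitly as $\frac{1}{2^m}\operatorname{vec}(A)^\intercal H_{2^m}$, which is the same invertibility argument), and parts~(ii) and~(iii) by the same inductions on Sylvester's construction with the same case analysis.

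One concrete slip in part~(ii): the column of $H_{2^m}$ that equals $\1$ stacked on $-\1$ is not the last column but column $2^{m-1}+1$, i.e., the first column of the right block $\begin{bmatrix} H_{2^{m-1}} \\ -H_{2^{m-1}}\end{bmatrix}$ (whose top half inherits the all-$1$'s first column of $H_{2^{m-1}}$). For $m\ge 2$ the literal last column is $(\boldc,-\boldc)$ with $\boldc$ the non-constant last column of $H_{2^{m-1}}$, so your description of the submatrix as having top half $[\,\cdot\,\vert\,1]$ and bottom half $[\,\cdot\,\vert\,-1]$ would be false as written. The fix is either to use column $2^{m-1}+1$ (as the paper does), or to keep the last column and observe that for each index $i$ the top-half and bottom-half rows agree on the $J$-entries and differ in sign in the appended entry, so together they still contribute both $(\boldv_i,1)$ and $(\boldv_i,-1)$. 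Either repair preserves your argument; nothing else needs to change.
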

\begin{proof}
    To prove the first claim, let~$$\mathbf{\lambda} = (\lambda_0, \lambda_1, \ldots, \lambda_{2^m-1}) \triangleq \frac{1}{2^m}\operatorname{vec}(A)^\intercal H_{2^m}.$$ Then, since $H_{2^m}H_{2^m}^\intercal = 2^m I$, it follows that $$\operatorname{vec}(A) = H_{2^m}\mathbf{\lambda}^\intercal,$$ indicating that~$\operatorname{vec}(A)$ is in the column span of~$H_{2^m}$ over~$\bR$ and that~$H_{2^m}$ is a valid dictionary by Definition~\ref{definition:dictionary}. The last two claims follow from the fact~\cite{macwilliams1977theory} that the rows of~$H_{2^m}$ are an~$\bF_2$-linear code of dimension~$m$, and a full proof is given in Appendix~\ref{section:omitted}. 
\end{proof}
\begin{remark}
    The second claim in Lemma~\ref{lemma:hadamardclaims} indicates that for the Hadamard matrix of order~$2^m$, there exist~$m$ columns upon which the other~$2^m-m$ columns depend. In other words, for every row in the matrix, knowing the entries of these $m$ columns suffices to identify the entire row. The proof of this claim, given in Appendix~\ref{section:omitted}, provides an explicit way to select these~$m$ columns. Note that the first column of~$H_{2^m}$ is a column of all $1$'s and thus cannot be one of the chosen columns. %The third claim shows that the rows of the Hadamard matrix are closed under element-wise multiplication.
\end{remark}

We are now ready to present a protocol for the hard set PI problem which employs Hadamard matrices. This protocol is a generalization of the improved random key protocol in Section \ref{section:pmonealgos}.

\subsubsection{The protocol}\label{section:hadamardalgo}
In the sequel, we assume that the hard set $A$ of size $2^m$ is public knowledge, and that the user and the server agree on the dictionary $D_A = H_{2^m}$, as well as the vector of coefficients $\mathbf{\lambda} = (\lambda_0, \lambda_1, \ldots, \lambda_{2^m-1}) = \frac{1}{2^m}\operatorname{vec}(A)^\intercal H_{2^m}$. We also assume that they agree on the same~$m$ columns of $H_{2^m}$ identified in Lemma~\ref{lemma:hadamardclaims}.$(ii)$, and let~$L\subseteq[2^m]$ be their indices with~$|L| = m$. Observe that $\boldw^{(i)}$ depends on the $(i+1)$-th column of $H_{2^m}$. To handle this discrepancy in indexing, we define $\Tilde{L} \triangleq L - 1 = \{j-1\vert j \in L\}$.
%Let these $m$ columns be indexed by~$L\subseteq [2^m]$ with $|L| = m$. 
The partition $\cS = \{S_1, \ldots, S_t\}$ of $[n]$ is also publicly known. Our protocol works as follows:
\begin{enumerate}[label=\roman*)]
    \item The server:
    \begin{enumerate}[label=\alph*)]
    \item Computes $\boldw^{(1)}, \ldots, \boldw^{(2^m-1)} \in \{\pm 1\}^n$ such that $\boldw = \sum_{i=1}^{2^m-1}\lambda_i \boldw^{(i)} + \lambda_0 \1$.%using~$D_A$.
    %\item Publishes the partition $\mathcal{S} = \{S_1, \ldots, S_t\}$ of $[n]$.
    \item Defines~$\hat{\boldw}^{(i)} \in \{\pm 1\}^n$ for every~$i \in [2^m-1]$ as follows: $$\hat{\boldw}^{(i)}\vert_{S_j} \triangleq w_{S_{j,1}}^{(i)}\cdot \boldw^{(i)}\vert_{S_j}$$ for every~$j \in [t]$. 
    \item Defines the last~$|S_j|-1$ entries of~$\hat{\boldw}^{(i)}\vert_{S_j}$ as~$$\boldu_{i,j} \triangleq \hat{\boldw}^{(i)}\vert_{S_j\setminus\{S_{j,1}\}},$$ for every~$i \in [2^m-1]$ and~$j \in [t]$.
    %\red{[There's an abuse of notation here, since there is no vector called $\hat{\boldw}^{(i)}$, and therefore it does not mean anything to reduce it to~$S_j$.]}
    \item Publishes $\boldu_{i,j}$ for every~$i \in \Tilde{L}$ and~$j \in [t]$.
    %\red{[Define new vectors~$\boldl_{i,j}$, which are the last~$|S_i|-1$ bits of $w_{S_{i,1}}^{(j)}\cdot \boldw^{(j)}\vert_{S_i}$]}.
    \end{enumerate}
    \item The user:
    \begin{enumerate}[label=\alph*)]
    \item Obtains~$\hat{\boldw}^{(i)}$ for every~$i \in [2^m-1]$. This can be done since the user knows~$\boldu_{i,j}$ for every~$i \in \Tilde{L}$ and~$j \in [t]$, as well as the dictionary~$D_A$; see the correctness proof below for a detailed argument.
    %Section \ref{section:correctnessofhardalgo} below for a detailed argument. }
    %\red{[Consider re-writing using~$\bF_2$-linear span of the columns of~$L$. More importantly, the notation~$\hat{\boldw}^{(j)}\vert_{S_i\setminus\{S_{i,1}\}}$ was not defined for~$j\notin L$. We need to provide a new definition for~$\boldl_{i,j}$ for any~$j\notin L$.]}
    \item Computes and sends to the server $(\hat{\boldw}^{(i)}\vert_{S_j})(\boldx\vert_{S_j})^\intercal$ for every $i \in [2^m-1]$ and~$j \in [t]$, as well as $\1 \boldx^\intercal$.
    \end{enumerate}
    \item The server:
    \begin{enumerate}[label=\alph*)]
    \item Computes $$(\boldw^{(i)}\vert_{S_j})(\boldx\vert_{S_j})^\intercal = w_{S_{i,1}}^{(j)}\cdot(\hat{\boldw}^{(i)}\vert_{S_j})(\boldx\vert_{S_j})^\intercal$$ for every~$i \in [2^m-1]$ and~$j \in [t]$.
    \item Computes $$\sum_{i=1}^{2^m-1} \sum_{j=1}^t \lambda_i (\boldw^{(i)}\vert_{S_j})(\boldx\vert_{S_j})^\intercal + \lambda_0 \1 \boldx^\intercal = \boldw\boldx^\intercal.$$
    \end{enumerate}
\end{enumerate}

Next, we show the correctness of this protocol by arguing that the user can perform Step ii.a).

\subsubsection{Correctness of the protocol}\label{section:correctnessofhardalgo}

The user receives~$\boldu_{i,j}$ for every~$i \in \Tilde{L}$ and~$j \in [t]$ from the server. Therefore, the user can obtain~$\hat{\boldw}^{(i)}$ for every~$i \in \Tilde{L}$, since $\hat{\boldw}^{(i)}\vert_{S_j} = (1, \boldu_{i,j})$. We now show that using $\hat{\boldw}^{(i)}$ for all $i \in \Tilde{L}$ and the dictionary $D_A = H_{2^m}$, the user is able to compute $\hat{\boldw}^{(i)}$ for all $i \in [2^m-1] \setminus \Tilde{L}$ as well. Define a matrix $\boldW \in \{\pm 1\}^{2^m \times n}$ as follows:
\begin{align*}
    \boldW = \begin{bmatrix}
        \1 \\
        \boldw^{(1)} \\
        \vdots \\
        \boldw^{(2^m-1)}
    \end{bmatrix}.
\end{align*}
It can be seen that the transpose of every column of $\boldW$ is a row of $D_A = H_{2^m}$.

% Denote the rows of $D_A = H_{2^m}$ as $R_1, \ldots, R_{2^m}$. We can identify every column of $\boldW$ with the transpose of one of the $R_i$'s restricted to the $m$ entries indexed by $L$.

We also define a different matrix $\hat{\boldW} \in \{\pm 1\}^{2^m \times n}$ similarly as follows:
\begin{align*}
    \hat{\boldW} = \begin{bmatrix}
        \1 \\
        \hat{\boldw}^{(1)} \\
        \vdots \\
        \hat{\boldw}^{(2^m-1)}
    \end{bmatrix}.
\end{align*}
Fix some $S_j\in \cS$. From the definition of $\hat{\boldw}^{(i)}$, the columns of $\hat{\boldW}\vert_{S_j}$ are precisely the respective element-wise product of every column of $\boldW\vert_{S_j}$ with the column $\boldW\vert_{S_{j,1}}$. Since every column of $\boldW$ is the transpose of some row of $D_A$, according to Lemma \ref{lemma:hadamardclaims}.$(iii)$, every column of $\hat{\boldW}$ is also the transpose of some (possibly different) row of $D_A$. Hence, by Lemma \ref{lemma:hadamardclaims}.$(ii)$, if the user knows the dictionary $D_A$ as well as the rows of $\hat{\boldW}$ indexed by $L$, it can construct the entire $\hat{\boldW}$. 

% inspect $D_A$ and obtain $\hat{\boldw}^{(j)}$ for every $j \in [2^m-1]$. It is also straightforward to see that the server only needs to multiply back $w_{S_{i,1}}^{(j)}$ to obtain $(\boldw^{(j)}\vert_{S_i})(\boldx\vert_{S_i})^\intercal$ for every $j$.
\subsubsection{An illustrative example}
We now illustrate the protocol with the following example.
\begin{example}
    Let $m = 3$ and let $A = \{a_1, \ldots, a_8\}$ be a hard set of size $8$, with $a_1 < \cdots < a_8$. Each $a_i$ corresponds to row $i$ of the dictionary
    \begin{align*}
        D_A = H_8 = \begin{bmatrix}
        1 & \phantom{-}1 & \phantom{-}1 & \phantom{-}1 & \phantom{-}\blue{1} & \phantom{-}1 & \phantom{-}\blue{1} & \phantom{-}\blue{1}\\
        1 & -1 & \phantom{-}1 & -1 & \phantom{-}\blue{1} & -1 & \phantom{-}\blue{1} & \blue{-1}\\ 1 & \phantom{-}1 & -1 & -1 & \phantom{-}\blue{1} & \phantom{-}1 & \blue{-1} & \blue{-1}\\ 1 & -1 & -1 & \phantom{-}1 & \phantom{-}\blue{1} & -1 & \blue{-1} & \phantom{-}\blue{1}\\ 1 & \phantom{-}1 & \phantom{-}1 & \phantom{-}1 & \blue{-1} & -1 & \blue{-1} & \blue{-1}\\ 1 & -1 & \phantom{-}1 & -1 & \blue{-1} & \phantom{-}1 & \blue{-1} & \phantom{-}\blue{1}\\1 & \phantom{-}1 & -1 & -1 & \blue{-1} & -1 & \phantom{-}\blue{1} & \phantom{-}\blue{1}\\ 1 & -1 & -1 & \phantom{-}1 & \blue{-1} & \phantom{-}1 & \phantom{-}\blue{1} & \blue{-1}
    \end{bmatrix}.
    \end{align*}
    Fix~$L = \{5, 7, 8\}$ (corresponding columns colored in blue) as an index set that satisfies the requirement of Lemma~\ref{lemma:hadamardclaims}.(ii). This results in $\Tilde{L} = \{4,6,7\}$. Now, say $\boldw = [a_2, a_3, a_4, a_5]$. As noted above, $\boldw^{(i)}$ depends on the $(i+1)$-th column of $H_{8}$. Hence, by looking at rows $2,3,4$ and $5$ of the colored columns we get $$\boldw^{(4)} = [\phantom{-}1, \phantom{-}1, \phantom{-}1,-1],$$ $$\boldw^{(6)} = [\phantom{-}1, -1, -1,-1],$$ $$\boldw^{(7)} = [-1, -1, \phantom{-}1,-1].$$ Now, suppose $t = 1$, which means $S_1 = [4]$. Then we have 
    $$\hat{\boldw}^{(4)} = w^{(4)}_1 \cdot \boldw^{(4)} = 1 \cdot \boldw^{(4)} =[\phantom{-}1, \phantom{-}1, \phantom{-}1,-1],$$ and similarly,
    \begin{align*}
        \hat{\boldw}^{(6)} &= 1 \cdot \boldw^{(6)} = [\phantom{-}1, -1, -1,-1]\\
        \hat{\boldw}^{(7)} &= -1 \cdot \boldw^{(7)} =[\phantom{-}1, \phantom{-}1, -1,\phantom{-}1].
    \end{align*}
   The server publishes $[1, 1, -1]$, $[-1, -1, -1]$ and $[1, -1, 1]$.
    
   Observe that~$\hat{w}_1^{(i)} = 1$ for all~$i \in [7]$. Clearly, the user is able to obtain~$\hat{\boldw}^{(4)}$, $\hat{\boldw}^{(6)}$ and $\hat{\boldw}^{(7)} $ by prefixing ``$1$'' to the corresponding queries. Now we focus on $\hat{\boldw}^{(5)}$ as an example to illustrate that the user can obtain $\hat{\boldw}^{(i)}$ for all~$i \in [7] \setminus \Tilde{L}$. To obtain $\hat{w}^{(5)}_2$, the user inspects $D_A$ for the row containing $1, -1, 1$ in the $5$th, $7$th and $8$th entries, respectively; to obtain $\hat{w}^{(5)}_3$, the row containing $1, -1, -1$ in the $5$th, $7$th and $8$th entries, respectively, and to obtain $\hat{w}^{(5)}_4$, the row containing $-1, -1, 1$ in the $5$th, $7$th and $8$th entries, respectively. As a result, $$\hat{\boldw}^{(5)}\vert_{\{2,3,4\}} = [-1, 1, 1],$$ and $$\hat{\boldw}^{(5)} = [1, -1, 1, 1].$$ Also, from the dictionary the server knows that $$\boldw^{(5)} = [-1, 1, -1, -1].$$ Thus, the server is able to retrieve $\boldw^{(5)}\boldx^\intercal$ based on $\hat{\boldw}^{(5)}\boldx^\intercal$ from the user (i.e., $\boldw^{(5)}\boldx^\intercal = -\hat{\boldw}^{(5)}\boldx^\intercal$). Similarly, the server can retrieve $\boldw^{(i)}\boldx^\intercal$ for all $i \in [7]$ and linearly combine them to retrieve $\boldw\boldx^\intercal$.
\end{example}
\subsubsection{Merit of the protocol}

We now present a theorem that provides the figures of merit of the above protocol.
\begin{theorem}
    For any privacy parameter~$t \in [n]$, the above protocol for the hard set PI problem has publication cost $d = m(n - t)$, server-privacy $I(W; Q) = m(n - t)$, and user-privacy $\ell = (2^m-1)t+1$. 
\end{theorem}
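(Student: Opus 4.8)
The plan is to establish the three figures of merit in turn. The publication-cost and user-privacy claims follow directly from the protocol description (note that the partition $\cS$ is fixed in advance, so it contributes nothing to $\boldq$): the server publishes the vectors $\boldu_{i,j}$ for $i\in\Tilde{L}$ and $j\in[t]$, and $\boldu_{i,j}$ has length $|S_j|-1$, so since $|\Tilde{L}|=|L|=m$ and $\cS$ partitions $[n]$ we get $d=\sum_{i\in\Tilde{L}}\sum_{j=1}^{t}(|S_j|-1)=m\bigl(\sum_{j=1}^{t}|S_j|-t\bigr)=m(n-t)$; and the user reveals $\boldx$ only through the $(2^m-1)t$ projections $(\hat{\boldw}^{(i)}\vert_{S_j})(\boldx\vert_{S_j})^\intercal$ (over $i\in[2^m-1]$, $j\in[t]$) together with $\1\boldx^\intercal$, so under the worst-case convention $\ell=(2^m-1)t+1$. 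The substantive claim is the server-privacy one, for which I would write $I(W;Q)=H(W)-H(W|Q)=mn-H(W|Q)$ and aim to show $H(W|Q)=mt$.

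The first step is to observe that $\boldw$ and the tuple $(\boldw^{(i)})_{i\in\Tilde{L}}$ determine one another. From $\operatorname{vec}(A)=H_{2^m}\mathbf{\lambda}^\intercal$, each value $w_s\in A$ occupies a unique row of $H_{2^m}$, and that row fixes all of $w^{(1)}_s,\ldots,w^{(2^m-1)}_s$; conversely, by Lemma~\ref{lemma:hadamardclaims}.$(ii)$ the $m$ entries of that row in the columns indexed by $L$ — equivalently, the bits $w^{(i)}_s$ for $i\in\Tilde{L}$ — already pin the row down and hence recover $w_s$. Since $W=\textrm{Unif}(A^n)$ with $|A^n|=2^{mn}$ and $(\boldw^{(i)})_{i\in\Tilde{L}}$ ranges over $\{\pm1\}^{mn}$, this correspondence is a bijection, so the bits $\{w^{(i)}_s:i\in\Tilde{L},\,s\in[n]\}$ are i.i.d.\ uniform over $\{\pm1\}$.

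It remains to pin down exactly what $\boldq$ reveals. The query consists precisely of the vectors $\boldu_{i,j}=w^{(i)}_{S_{j,1}}\cdot\boldw^{(i)}\vert_{S_j\setminus\{S_{j,1}\}}$, so for each $i\in\Tilde{L}$ and $j\in[t]$ it discloses the products $w^{(i)}_{S_{j,1}}w^{(i)}_s$ for $s\in S_j\setminus\{S_{j,1}\}$, i.e., the segment $\boldw^{(i)}\vert_{S_j}$ up to a single global sign. Hence, conditioned on $Q=\boldq$, the tuple $(\boldw^{(i)})_{i\in\Tilde{L}}$ is uniform over its preimage set under this map, and such a preimage is obtained by freely choosing the $mt$ signs $\{w^{(i)}_{S_{j,1}}:i\in\Tilde{L},\,j\in[t]\}$ (all other coordinates are then forced), giving $2^{mt}$ equiprobable possibilities. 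Therefore $H(W|Q=\boldq)=mt$ for every $\boldq$, whence $H(W|Q)=mt$ and $I(W;Q)=m(n-t)$. I expect this last portion to be the main obstacle: one must genuinely use the structure of the Sylvester-Hadamard dictionary (via Lemma~\ref{lemma:hadamardclaims}.$(ii)$) to see that the components indexed by $\Tilde{L}$ retain all the information in $\boldw$ and are i.i.d.\ uniform, and then verify that publishing the $\boldu_{i,j}$ leaks exactly one sign bit per pair $(i,j)\in\Tilde{L}\times[t]$ and nothing more.
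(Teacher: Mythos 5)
Your proof is correct and takes essentially the same route as the paper's: publication cost and user-privacy are read off the protocol, and server-privacy is obtained by showing that conditioned on the query the only residual uncertainty is the $mt$ equiprobable sign bits $w^{(i)}_{S_{j,1}}$, $i\in\Tilde{L}$, $j\in[t]$, giving $H(W|Q)=mt$. The one difference is that you explicitly justify—via the bijection between $A$ and $\{\pm 1\}^m$ induced by the columns indexed by $L$ (Lemma~\ref{lemma:hadamardclaims}.$(ii)$)—why the components $(\boldw^{(i)})_{i\in\Tilde{L}}$ are i.i.d.\ uniform and hence why the $2^{mt}$ preimages of any query are equiprobable, a step the paper's proof asserts without elaboration.
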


\begin{proof}
    The user-privacy claim is immediate by definition. The server saves one bit of publication cost in every~$\hat{\boldw}^{(i)}\vert_{S_j}$ for every $i \in \Tilde{L}$ with $|\Tilde{L}| = m$ and for every~$j \in [t]$, thus leading to a total publication cost of $m(n - t)$ bits. To compute $I(W; Q) = H(W) - H(W|Q)$, first recall that $W$ is uniform over an alphabet of size $2^{mn}$, and so $H(W) = mn$. After the query is published, from the user's perspective, the only source of uncertainty of $\boldw$ comes from the keys $w_{S_{j,1}}^{(i)}$, and there are a total of $2^{mt}$ possibilities for their values. All these possibilities are equiprobable, since the $\boldw^{(i)}$'s for all $i \in \Tilde{L}$ are mutually independent and uniform over $\{\pm 1\}^n$. It follows that $W|Q = \boldq$ is uniform over a set of size $2^{mt}$, and $H(W|Q) = mt$.
\end{proof}

\begin{remark}
    For any arbitrary set $A$, let $\gamma_A \triangleq |\{i|\lambda_i\neq 0\}|$, where the $\lambda_i$'s are as defined in the proof of the first claim in Lemma \ref{lemma:hadamardclaims}. It follows from definition that if $A$ is a hard set of size $2^m$, then $\lambda_i \neq 0$ for all $i \in [2^m-1]$, resulting in $\gamma_A = 2^m-1$. If $A$ is not a hard set, then $\gamma_A$ can be smaller than $2^m-1$. The protocol given above for the hard set PI problem can be extended to the PI problem where the weight vector admits entries restricted to any arbitrary set $A$ of size $2^m$. For the resultant protocol, we have publication cost $d = m(n-t)$, server-privacy $I(W; Q) = m(n-t)$, and user-privacy $\ell = \gamma_A t+1$.
\end{remark}

\section{Bounds and optimality of protocols}\label{section:optimality}

As previously stated, naturally the quantities~$d$, $I(W; Q)$ and $\ell$ should be simultaneously minimized. In this section, however, we show that both privacy metrics cannot be minimized at the same time by deriving an inherent tradeoff. We also demonstrate a simple lower bound on publication cost. Then, we analyze the optimality of some of the protocols presented previously in terms of the privacy tradeoff and the bound on publication cost.

\subsection{General privacy tradeoff}

We show that for any finite set $A$, there is an inherent tradeoff between the two privacy measures. To this end, we need the following lemma, the proof of which is given in Appendix~\ref{section:omitted}.

\begin{lemma}\label{lemma:pmvectorsinsubspace}\cite{groenland2020intersection}
    For an $\bR$-subspace $S$ of dimension $\ell$, we have that $|S \cap A^n| \le |A|^\ell$ for any finite set $A$.
\end{lemma}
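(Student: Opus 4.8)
The plan is to exploit the elementary fact that an $\ell$-dimensional real subspace is determined by $\ell$ suitably chosen coordinates, together with the observation that every coordinate of a vector in $A^n$ is drawn from the finite set $A$.

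First I would fix a basis $\boldb_1,\ldots,\boldb_\ell$ of $S$ and stack these vectors as the rows of a matrix $\boldB \in \bR^{\ell \times n}$, so that $S = \{\boldc\boldB : \boldc \in \bR^\ell\}$ and $\boldB$ has rank $\ell$. Since column rank equals row rank, there is an index set $T \subseteq [n]$ with $|T| = \ell$ for which the $\ell \times \ell$ submatrix $\boldB\vert_{T}$ (the columns of $\boldB$ indexed by $T$) is invertible over $\bR$.

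Next I would verify that the coordinate-restriction map $\varphi : S \to \bR^{\ell}$ given by $\varphi(\boldv) = \boldv\vert_{T}$ is injective. Indeed, writing $\boldv = \boldc\boldB$ for the unique $\boldc \in \bR^\ell$, we get $\varphi(\boldv) = \boldc(\boldB\vert_{T})$; if $\varphi(\boldv) = \boldzero$ then $\boldc(\boldB\vert_{T}) = \boldzero$, and invertibility of $\boldB\vert_{T}$ forces $\boldc = \boldzero$, hence $\boldv = \boldzero$. Finally, restricting $\varphi$ to $S \cap A^n$: every $\boldv \in S \cap A^n$ has all entries in $A$, so $\boldv\vert_{T} \in A^{\ell}$, i.e., $\varphi(S \cap A^n) \subseteq A^{\ell}$. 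Injectivity then yields $|S \cap A^n| = |\varphi(S \cap A^n)| \le |A^{\ell}| = |A|^{\ell}$, as claimed.

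I do not anticipate a genuine obstacle here; the only points requiring a little care are the selection of the coordinate set $T$ (invoking row rank $=$ column rank of $\boldB$) and noting that the injectivity of $\varphi$ must be established over all of $\bR$, not merely on $A^n$, so that the count transfers cleanly. The bound is tight, as seen by taking $S = \RSpan\{\bolde_{i_1},\ldots,\bolde_{i_\ell}\}$ for distinct standard basis vectors, where $|S \cap A^n| = |A|^{\ell}$ exactly (assuming $0 \in A$).
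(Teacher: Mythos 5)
Your proof is correct and follows essentially the same route as the paper: the paper uses the reduced row-echelon form to exhibit $\ell$ coordinates that freely determine every vector of $S$ (writing $S=\{(\boldv,L(\boldv))\}$ up to permutation), which is the same idea as your selection of an invertible $\ell\times\ell$ submatrix $\boldB\vert_T$ and the injectivity of the coordinate projection onto $T$. Your write-up is, if anything, slightly more explicit about why the projection is injective.
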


To derive the tradeoff, we assume polynomial decoding at the server's side. That is, there exists a polynomial $f_Q:\bR^\ell\to\bR$, which only depends on $Q$, such that $\boldw\boldx^\intercal = f_Q(\boldv_1\boldx^\intercal, \ldots, \boldv_\ell\boldx^\intercal)$ for every $\boldx$. 

\begin{theorem}\label{theroem:generallowerbound}
    For any finite set $A$, it holds that $I(W; Q)+\ell \cdot \log |A| \ge n \cdot \log |A|$.
\end{theorem}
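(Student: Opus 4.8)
The plan is to bound the number of distinct values that the pair $(Q, \boldw\boldx^\intercal)$ can jointly take, or more precisely, to bound how many weight vectors $\boldw$ are consistent with a fixed query value and a fixed answer. The key observation is that polynomial decoding forces $\boldw$ to lie, for each fixed value of $\boldq$, in a bounded-size subset of $A^n$ determined by the answer coordinates. First I would fix a value $\boldq$ in the support of $Q$ and consider the set $\cW_\boldq \triangleq \{\boldw \in A^n : Q(\boldw) = \boldq\}$; since $H(Q\mid W)=0$, $Q$ is a deterministic function of $W$, so these sets partition $A^n$, and $H(W\mid Q=\boldq) = \log|\cW_\boldq|$ because $W$ is uniform on $A^n$ (hence uniform on each block $\cW_\boldq$). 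Thus $H(W\mid Q) = \sum_\boldq \Pr(Q=\boldq)\log|\cW_\boldq|$, and $I(W;Q) = n\log|A| - H(W\mid Q)$, so it suffices to show $\log|\cW_\boldq| \le \ell\log|A|$ for every $\boldq$, i.e. $|\cW_\boldq| \le |A|^\ell$.

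To prove $|\cW_\boldq| \le |A|^\ell$, I would exploit polynomial decoding. Given $\boldq$, the user's vectors $\boldv_1,\ldots,\boldv_\ell$ and the decoding polynomial $f_\boldq$ are fixed (they depend only on $\boldq$). Let $V = \RSpan\{\boldv_i\}_{i=1}^{\ell}$, an $\bR$-subspace of dimension at most $\ell$. The correctness guarantee is that $\boldw\boldx^\intercal = f_\boldq(\boldv_1\boldx^\intercal,\ldots,\boldv_\ell\boldx^\intercal)$ for \emph{every} $\boldx \in \bR^n$ and every $\boldw \in \cW_\boldq$. The crucial step: if two weight vectors $\boldw, \boldw' \in \cW_\boldq$ have the same orthogonal projection onto $V$ — equivalently, $\boldw - \boldw' \perp V$, i.e. $\boldv_i(\boldw-\boldw')^\intercal = 0$ for all $i$ — then I claim $\boldw = \boldw'$. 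Indeed, for any $\boldx$, write $\boldx = \boldx_V + \boldx_{V^\perp}$; then $\boldv_i\boldx^\intercal = \boldv_i\boldx_V^\intercal$ depends only on $\boldx_V$, so $f_\boldq(\boldv_1\boldx^\intercal,\ldots) $ depends only on $\boldx_V$. Hence $\boldw\boldx^\intercal$ and $\boldw'\boldx^\intercal$ both equal this same quantity, giving $(\boldw-\boldw')\boldx^\intercal = 0$ for all $\boldx$, so $\boldw = \boldw'$. Therefore the map $\boldw \mapsto \Pi_V(\boldw)$ (orthogonal projection onto $V$) is injective on $\cW_\boldq$, which means $|\cW_\boldq| = |\Pi_V(\cW_\boldq)|$.

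It remains to bound $|\Pi_V(\cW_\boldq)|$. Here I would use Lemma~\ref{lemma:pmvectorsinsubspace}: the image $\Pi_V(\cW_\boldq)$ lies in $V$, but a priori its points need not lie in $A^n$. To fix this, I observe that actually the relevant set is not the projection but the quotient: the fibers of $\Pi_V$ restricted to $A^n$ are the sets $A^n \cap (\boldw + V^\perp)$, and injectivity on $\cW_\boldq$ says each such fiber meets $\cW_\boldq$ at most once. A cleaner route: pick any $\ell$ coordinates... no — instead, note that since $\boldw-\boldw'$ being orthogonal to $V$ forces equality on $\cW_\boldq$, and $\dim V \le \ell$, we can choose a complement and argue that $\cW_\boldq$ injects into $A^n / \sim$ where the equivalence identifies vectors differing by an element of $V^\perp \cap (A^n - A^n)$. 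The most direct argument: the restriction of the quotient map $\bR^n \to \bR^n/V^\perp$ to $\cW_\boldq$ is injective, and $\bR^n/V^\perp \cong V$ has dimension $\le \ell$; composing with the isomorphism, $\cW_\boldq$ embeds into an $\ell$-dimensional space, but we need it to land in $A^n$ form to apply the lemma.

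The main obstacle is precisely this last point — translating "injects into an $\ell$-dimensional subspace" into "has size at most $|A|^\ell$" via Lemma~\ref{lemma:pmvectorsinsubspace}, since that lemma counts $|A|$-valued points inside a subspace, not arbitrary points. I expect the resolution is that one should argue on the answer side directly: the answer $\cA = (\boldv_1\boldx^\intercal,\ldots,\boldv_\ell\boldx^\intercal)$ together with $\boldq$ determines $\boldw\boldx^\intercal$; running this over a basis $\boldx = \bolde_1, \ldots, \bolde_n$ shows $\boldw$ is determined by the $\ell n$ numbers $(v_{i,j})$ — no. The genuinely correct finish, which I would look up, is that $\cW_\boldq \subseteq A^n$ and the injective linear map $\Pi$ sends it into an $\ell$-dimensional subspace $S$ of $\bR^n$ \emph{after a linear change of coordinates that preserves $A^n$} is too much to hope for; rather, one applies the lemma to the subspace $S'$ spanned by a set of $\le \ell$ coordinate vectors onto which $\cW_\boldq$ injects — guaranteeing such coordinates exist is the subtle combinatorial step, following from the fact that an $\ell$-dimensional subspace has an invertible $\ell\times\ell$ coordinate submatrix, so projection onto those $\ell$ coordinates is already injective on $\cW_\boldq$; then $\cW_\boldq$ injects into $A^\ell$, giving $|\cW_\boldq| \le |A|^\ell$ outright (the $A^n$-hypothesis on those coordinates is automatic), and Lemma~\ref{lemma:pmvectorsinsubspace} is the clean black-box statement of exactly this. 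Assembling: $H(W\mid Q=\boldq) = \log|\cW_\boldq| \le \ell\log|A|$ for all $\boldq$, hence $H(W\mid Q)\le \ell\log|A|$, and $I(W;Q) = n\log|A| - H(W\mid Q) \ge n\log|A| - \ell\log|A|$, which rearranges to the claimed bound.
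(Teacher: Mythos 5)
Your outer structure matches the paper's: reduce to showing $H(W\mid Q=\boldq)\le \ell\log|A|$, i.e.\ that the support $\cW_\boldq$ of $W\mid Q=\boldq$ has size at most $|A|^\ell$, and finish with Lemma~\ref{lemma:pmvectorsinsubspace}. But the middle of your argument has a genuine gap, and you visibly sense it in your last paragraph. Two problems. First, your ``crucial step'' as written never uses the hypothesis $\boldw-\boldw'\perp V$: if $f_\boldq$ and the $\boldv_i$ depend only on $\boldq$, then \emph{any} two $\boldw,\boldw'\in\cW_\boldq$ satisfy $\boldw\boldx^\intercal=f_\boldq(\boldv_1\boldx^\intercal,\ldots,\boldv_\ell\boldx^\intercal)=\boldw'\boldx^\intercal$ for all $\boldx$, forcing $|\cW_\boldq|=1$ and trivializing the theorem --- which contradicts the paper's own protocols achieving $H(W\mid Q)=t>0$. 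The intended model (consistent with the coset protocol, whose decoder $\sum_i\ell_i y_i$ has coefficients depending on $\boldw$) is that the decoding polynomial may depend on $W$ but not on $X$; under that reading, different $\boldw\in\cW_\boldq$ use different polynomials and your injectivity proof collapses. Second, even granting that $\Pi_V$ is injective on $\cW_\boldq$, this does \emph{not} yield $|\cW_\boldq|\le|A|^\ell$: a linear map into an $\ell$-dimensional space can be injective on all of $A^n$ (e.g.\ $n=2$, $\ell=1$, $A=\{0,1\}$, $\boldv_1=(1,2)$, where $\boldw\mapsto w_1+2w_2$ is injective on $A^2$). Your proposed fix --- project onto $\ell$ coordinates where an $\ell\times\ell$ submatrix is invertible --- only gives injectivity on the subspace $V$ itself, and hence lands in $A^\ell$, only if $\cW_\boldq\subseteq V$; injectivity of $\Pi_V$ on $\cW_\boldq$ is strictly weaker than that containment.

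The missing idea is precisely the containment $\cW_\boldq\subseteq\RSpan\{\boldv_i\}_{i=1}^\ell$. The paper gets it by writing $f_Q(\boldy)=\sum_{\boldd}f_\boldd\boldy^\boldd$ and noting that $f_Q(\boldv_1\boldx^\intercal,\ldots,\boldv_\ell\boldx^\intercal)-\boldw\boldx^\intercal$ must be the zero polynomial in $x_1,\ldots,x_n$; matching the coefficient of each $x_i$ gives $\boldw=\sum_{r=1}^\ell f_{\bolde_r}\boldv_r$, so every $\boldw$ in the support lies in the $\ell$-dimensional span, and Lemma~\ref{lemma:pmvectorsinsubspace} then bounds the support by $|A|^\ell$. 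Your observation that the decoder's output depends only on the component of $\boldx$ in $V$ could be salvaged into an alternative derivation of the same containment --- restrict to $\boldx\in V^\perp$, where the decoder is constant while $\boldx\mapsto\boldw\boldx^\intercal$ is linear, forcing $\boldw\perp V^\perp$, i.e.\ $\boldw\in V$ --- but that is not the argument you made.
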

\begin{proof}
    It is readily seen that $H(W) = n \cdot \log |A|$, so $$I(W; Q) = H(W) - H(W|Q) = n\cdot \log |A| - H(W|Q).$$ It then suffices to show that $H(W|Q) \le \ell \cdot \log |A|$. First, observe that since the vectors~$\boldv_i$ are a deterministic function of~$Q$, it follows that~$H(W\vert Q)=H(W,\{ \boldv_i \}_{i=1}^\ell\vert Q)$. Second, we assume polynomial decoding, i.e., that there exists~$f_Q:\bR^\ell\to\bR$ such that~$\boldw\boldx^\intercal = f_Q(\boldv_1\boldx^\intercal, \ldots, \boldv_\ell\boldx^\intercal)$. Since the scheme must be valid for any data distribution, it must be valid for every~$\boldx$. That is, the polynomial~$$f_Q(\boldv_1\boldx^\intercal, \ldots, \boldv_\ell\boldx^\intercal)-\boldw\boldx^\intercal,$$ seen as a polynomial in the~$n$ variables~$x_1,\ldots,x_n$, must be the zero polynomial. %Notice that setting the coefficients of the~$n$ \textit{linear} monomials~$x_1,\ldots,x_n$ in~$f_Q$ yields a linear 
    
    Denote~$$f_Q(y_1,\ldots,y_\ell)=\sum_{\boldd\in \bN^\ell}f_\boldd\boldy^\boldd,$$ where~$$\boldy^\boldd=y_1^{d_1}\cdots y_\ell^{d_\ell},$$ and~$f_\boldd\in\bR$ for every~$\boldd$. It follows that for each~$i\in[n]$, the coefficient of~$x_i$ in~$$f_Q(\boldv_1\boldx^\intercal, \ldots, \boldv_\ell\boldx^\intercal)-\boldw\boldx^\intercal$$ is~$\sum_{r=1}^{\ell}f_{\bolde_r} v_{r,i}-w_i$, where~$\bolde_i$ is the~$i$-th unit vector of length~$\ell$. Setting these coefficients to zero yields the linear equation~$$\sum_{r=1}^\ell f_{\bolde_{r}}\boldv_r=\boldw,$$ and therefore~$$\boldw\in \RSpan\{ \boldv_i\}_{i=1}^\ell.$$
    
    % and since the decoding coefficients~$\alpha_i$ depend only on~$Q$, it follows that for~$\boldx_1,\ldots,\boldx_n$ which span~$\bR^n$, and for a matrix~$X\in \bR^{n\times n}$ whose rows are the~$\boldx_i$'s, it must be that~$X\boldw^\intercal=X\cdot \sum_{i=1}^\ell \alpha_i\boldv_i^\intercal$. Since~$X$ is invertible, it follows that~$\boldw=\sum_{i=1}^\ell \alpha_i\boldv_i$.
	
	Now, for query~$\boldq$ in the support of~$Q$, we bound the support size of the random variable~$W,\{ \boldv_i \}_{i=1}^\ell\vert Q=\boldq$ from above. Since any~$\boldw$ in this support is in the $\bR$-span of the vectors~$\boldv_1,\ldots,\boldv_\ell$, it follows that the size of this support cannot be larger than the maximum possible number of~$A^n$ vectors in an~$\bR$-subspace of dimension~$\ell$. Formally, it is readily verified that
	\begin{align*}
		|\operatorname{Supp}(W,\{ \boldv_i \}_{i=1}^\ell\vert Q=\boldq)|\le \max_{S\vert \dim_\bR(S)=\ell}|S\cap A^n|.
	\end{align*}
	Lemma~\ref{lemma:pmvectorsinsubspace} implies that $$\max_{S\vert \dim_\bR(S)=\ell}|S\cap A^n|\le |A|^\ell,$$ and hence (by \cite[Thm.~2.6.4]{CovThom06}) $H(W,\{ \boldv_i \}_{i=1}^\ell\vert Q=\boldq)$ cannot be larger than~$H(\textrm{Unif}(A^\ell))$, which is equal to~$\ell \cdot \log |A|$. Therefore,
	\begin{align*}
		H(W\vert Q)&=H(W,\{ \boldv_i \}_{i=1}^\ell\vert Q)\\&=\sum_{\boldq\in\operatorname{Supp}(Q)}\Pr(Q=\boldq) H(W,\{ \boldv_i \}_{i=1}^\ell\vert Q=\boldq)\\&\le \ell\cdot \log |A| \sum_{\boldq\in\operatorname{Supp}(Q) }\Pr(Q=\boldq)=\ell \cdot \log |A|. \qedhere
	\end{align*}
\end{proof}
Next, we provide a lower bound on the publication cost $d$. As mentioned in the problem definition, the number $d$ of published bits is a random variable that depends on $Q$. Therefore, the problem of minimizing the expected value of $d$ is a source coding problem, and the expected value of $d$ is lower bounded by $H(Q) = I(Q; W) + H(Q|W) = I(Q; W)$, leading to the following theorem.

\begin{theorem}\label{theorem:publicationcostbound}
    The expected value of $d$ is lower bounded by $I(W; Q)$.
\end{theorem}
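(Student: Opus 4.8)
The plan is to invoke the converse part of Shannon's source coding theorem. Recall that the publication cost $d$ is the number of bits in the query $\boldq$, and this number may vary with the realization of $Q$; accordingly, the protocol implicitly defines a variable-length prefix-free (or at least uniquely decodable) encoding of the random variable $Q$, whose expected length is $\bE[d]$. By the source coding theorem \cite[Th.~5.3.1]{CovThom06}, any such code must satisfy $\bE[d] \ge H(Q)$.

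It therefore remains to show $H(Q) = I(W; Q)$. The key observation is the chain rule identity $H(Q) = I(W; Q) + H(Q \mid W)$, so it suffices to argue $H(Q \mid W) = 0$. But this is immediate from the problem setup: the distribution $Q$ from which the query is drawn is a deterministic function of $W$, i.e., $H(Q \mid W) = 0$ by assumption (this is the very first bullet in the description of query-answer protocols). Hence $H(Q) = I(W; Q)$, and combining with the source coding bound gives $\bE[d] \ge H(Q) = I(W; Q)$, which is the claim.

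I do not anticipate a genuine obstacle here — the statement is essentially a repackaging of facts already assembled in the paragraph immediately preceding the theorem, and the proof is a one-line application of the source coding theorem together with the deterministic dependence $H(Q \mid W) = 0$. The only point requiring a small amount of care is making explicit that the protocol's variable-length query encoding is the object to which the source coding converse applies; one should note that the lower bound $H(Q)$ is achievable in expectation (up to one bit) via an optimal prefix code, so $I(W;Q)$ is the right benchmark and not merely a loose bound.
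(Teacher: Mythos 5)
Your proof is correct and follows exactly the paper's argument: the bound $\bE[d]\ge H(Q)$ from the source coding converse, combined with $H(Q)=I(W;Q)+H(Q\mid W)$ and the protocol assumption $H(Q\mid W)=0$, is precisely the reasoning given in the paragraph preceding the theorem. No gaps.
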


\subsection{Optimality of the $\{\pm 1\}$ protocols}\label{section:pmoneoptimality}
For the $\{\pm 1\}$ PI problem, the general tradeoff in Theorem \ref{theroem:generallowerbound} specializes to the following:
\begin{align*}
    I(W; Q) + \ell \ge n.
\end{align*}
Because both protocols in Section \ref{section:pmonealgos} have $\ell = t$ and $I(W; Q) = n - t$, they indeed attain this tradeoff exactly and are therefore optimal in this sense. In addition, $d = n - t$ in both protocols, indicating that the publication cost is also optimal.

% \subsection{Optimality of the roots-of-unity protocol}

% \red{Bound is for real numbers}
% In the problem regarding roots of unity of any order $p$, the general tradeoff in Theorem \ref{theroem:generallowerbound} specializes to the following:
% \begin{align*}
%     I(W; Q) + \ell \cdot \log p\ge n \cdot \log p.
% \end{align*}
% Because our scheme in Section \ref{section:rootsofunityprotocol} gives $\ell = t$ and $I(W; Q) = (n - t) \log p$, the tradeoff is again exactly attained. In addition, $d = (n - t) \log p$ in the protocol as well, indicating that the publication cost is also optimal.

% \begin{corollary}
%     The schemes in Section \ref{section:pmonealgos} are optimal both in their privacy guarantees and publication costs. In addition, the schemes allow infinite precision in retrieving $\boldw\boldx^\intercal$ and hence incur zero accuracy loss in applying the server’s model.
% \end{corollary}

\subsection{Optimality of the perfect set protocol} \label{section:perfectptimality}

For a perfect set $A$ of size $2^m$, the general tradeoff specializes to the following:
\begin{align*}
    I(W; Q) + m\ell \ge mn.
\end{align*}
For any $t \in [n]$ and any integer power of two $q\le \min\{t,2^{m-1}\}$, the protocol presented in Section \ref{section:perfectset} for perfect sets of size $2^m$ has publication cost $d \le (n-t) \log t + nH(\frac{t}{n}) + m(n-t)$, user-privacy $\ell = t(m - \log q)$, and server-privacy $I(W; Q) = m(n - t)$. If we assume $t \ge 2^{m-1}$ and choose $q = 2^{m-1}$, then our scheme gives $\ell = t$ and $I(W; Q) = m(n - t)$. This tells us that the tradeoff is achieved in this protocol, provided $t \ge 2^{m-1}$. The publication cost, however, is not optimal.

\subsection{Optimality of the joint retrieval protocol} \label{section:jointoptimality}

The general tradeoff in Theorem \ref{theroem:generallowerbound} concerns the retrieval of one signal for a finite alphabet. That being said, because the joint retrieval approach works for the perfect set problem, the lower bound should also hold for the joint retrieval PI problem in Section \ref{section:jointretrieval}. Thus, we know that the joint retrieval protocol in Section \ref{section:jointprotocol} is also optimal in terms of the privacy tradeoff for $t \ge 2^{m-1}$, but not optimal in terms of publication cost. 
\section{Conclusion and future work}\label{section:conclusion}

We take an information-theoretic approach to private inference and view PI as the task of retrieving inner products of parameter vectors with data vectors, a fundamental operation in many machine learning models. We introduce protocols that enable retrieval of inner products for a variety of quantized models while guaranteeing some level of privacy for both the server and the user. Additionally, we derive a general inherent tradeoff between user and server privacy and show that our protocols are optimal in certain interesting settings. As of now, we do not have an optimal scheme for the hard set problem, or the $\{0, \pm 1\}$ problem in the Appendix. Closing the gap in these variants is left as future work. Furthermore, with the recently introduced notion of coefficient complexity \cite{ramkumar2023coefficient}, it remains to be explored if a more rigid privacy tradeoff which depends on coefficient complexity can be derived. 

It was recently shown in \cite{deng2023approximate} that with the $\{\pm 1\}$ problem, if an \textit{approximate} computation of the signal is sufficient for inference, then one can break the barrier presented in Section \ref{section:pmoneoptimality} and provide better privacy guarantees with the same publication cost. An approximation protocol that yields an approximate signal that is very close to the exact signal is also provided in \cite{deng2023approximate}. Extending the idea of approximating signals to cases other than $\{\pm 1\}$ to guarantee better server and user privacy is an intriguing future research direction.
\printbibliography

\appendices
\section{PI problem for roots of unity}\label{section:rootsofunity}

Here we look at a generalization of the $\{\pm 1\}$ case where the weights are restricted to roots of unity of any order. Denote by $R_p$ the set of roots of unity of order $p$, where $p$ is a positive integer. The server now holds a weight vector $\boldw \in R_p^n$ randomly chosen from a uniform distribution $W = \textrm{Unif}(R_p^n)$, and a user holds a complex-valued data vector $\boldx \in \bC^n$ that is randomly chosen from some continuous data distribution~$X$. Since $\boldw$ and $\boldx$ are both complex here, we measure the user-privacy $\ell$ in terms of complex projections as well.

\subsection{The roots-of-unity protocol}
\label{section:rootsofunityprotocol}
Our protocol for this problem variant works in a similar way to that in the improved random key protocol in Section \ref{section:pmonealgos}. Specifically, let $\ell_i \triangleq (w_{S_{i,1}})^{p-1}$ for every $i \in [t]$. Since the entries of $\boldw$ are all roots of unity of order $p$, it follows that $(w_{S_{i,1}})^{p-1}\cdot(w_{S_{i,1}}) = 1$, and thus the first entry of $(w_{S_{i,1}})^{p-1}\cdot \boldw\vert_{S_i}$ is $1$ for every $i$. The scheme works as follows:

\begin{enumerate}[label=\roman*)]
    \item The server publishes $(w_{S_{i,1}})^{p-1}\cdot \boldw\vert_{S_i \setminus \{S_{i,1}\}}$ for every $i \in [t]$.
    \item The user sends $$\{(1, (w_{S_{i,1}})^{p-1}\cdot \boldw\vert_{S_i \setminus \{S_{i,1}\}})(\boldx\vert_{S_i})^\intercal\}_{i=1}^{t}$$ to the server (computed over~$\bC$).
    \item The server computes and outputs $$\sum_{i=1}^{t}(w_{S_{i,1}})^{p-1}\cdot(1, (w_{S_{i,1}})^{p-1}\cdot \boldw\vert_{S_i \setminus \{S_{i,1}\}})(\boldx\vert_{S_i})^\intercal = \boldw\boldx^\intercal.$$   
\end{enumerate}

\begin{theorem}\label{theorem:rootofunityprotocolmerit}
    For any $t \in [n]$, the above protocol has publication cost $d = (n - t)\log p$, server-privacy $I(W; Q) = (n - t)\log p$, and user-privacy $\ell = t$ (complex projections).
\end{theorem}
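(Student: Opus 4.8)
The plan is to follow the same template as the proof of Theorem~1 for the $\{\pm1\}$ case, since the roots-of-unity protocol is the direct generalization of the improved random key protocol with the keys $\ell_i=(w_{S_{i,1}})^{p-1}$ in place of $w_{S_{i,1}}$.

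First, the user-privacy claim $\ell=t$ is immediate by definition: the user transmits exactly one complex inner product, namely $(1,(w_{S_{i,1}})^{p-1}\cdot\boldw\vert_{S_i\setminus\{S_{i,1}\}})(\boldx\vert_{S_i})^\intercal$, for each $i\in[t]$, so $\boldx$ is revealed only through $t$ linear projections. For the publication cost, I would observe that for each $i$ the server publishes the $|S_i|-1$ entries of $(w_{S_{i,1}})^{p-1}\cdot\boldw\vert_{S_i\setminus\{S_{i,1}\}}$, each of which is a root of unity of order $p$ and hence encodable with $\log p$ bits; summing over $i\in[t]$ and using $\sum_{i=1}^t|S_i|=n$ gives $d=\sum_{i=1}^t(|S_i|-1)\log p=(n-t)\log p$.

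The crux is the server-privacy computation $I(W;Q)=H(W)-H(W|Q)$. Since $W=\textrm{Unif}(R_p^n)$ we have $H(W)=n\log p$, so it suffices to show $H(W|Q)=t\log p$. Fix $\boldq$ in the support of $Q$. For each segment $S_i$, the published block $(w_{S_{i,1}})^{p-1}\cdot\boldw\vert_{S_i\setminus\{S_{i,1}\}}$ determines $\boldw\vert_{S_i}$ up to the single unknown $\alpha_i\triangleq w_{S_{i,1}}\in R_p$, via $w_{S_{i,1}}=\alpha_i$ and $w_j=\alpha_i\cdot(\text{published entry indexed by }j)$ for $j\in S_i\setminus\{S_{i,1}\}$, using $(\alpha_i)^{p-1}=\alpha_i^{-1}$. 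Conversely, every choice of $(\alpha_1,\ldots,\alpha_t)\in R_p^t$ yields a weight vector in $R_p^n$ consistent with $\boldq$, because a product of two $p$-th roots of unity is again a $p$-th root of unity, and distinct key tuples yield distinct weight vectors. Hence the preimage of $\boldq$ has size exactly $p^t$, so $W\vert Q=\boldq$ is uniform over these $p^t$ vectors and $H(W\vert Q=\boldq)=t\log p$ independently of $\boldq$. Averaging, $H(W|Q)=\sum_{\boldq}\Pr(\boldq)H(W\vert Q=\boldq)=t\log p$, and therefore $I(W;Q)=n\log p-t\log p=(n-t)\log p$.

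I expect the only real subtlety to be the counting of the preimage of $\boldq$: one must check both that each choice of the key roots of unity $\alpha_i$ produces a legitimate element of $R_p^n$ mapped to $\boldq$ (closure of $R_p$ under multiplication) and that no two key tuples collide, so that the preimage size is exactly $p^t$ rather than something smaller. Everything else is bookkeeping fully analogous to the $\{\pm1\}$ proof, and the correctness of the output $\boldw\boldx^\intercal$ has already been verified in the description of the protocol.
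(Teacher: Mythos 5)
Your proposal is correct and follows essentially the same route as the paper's own proof: user-privacy and publication cost read off the protocol directly, and server-privacy reduces to showing $H(W\vert Q=\boldq)=t\log p$ because the $t$ keys in $R_p$ are the only remaining uncertainty, giving a uniform conditional distribution over $p^t$ consistent weight vectors. Your explicit verification that the map from key tuples to preimage elements is a bijection (closure of $R_p$ under multiplication, no collisions) is a detail the paper leaves implicit, but it is the same argument.
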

\begin{proof}
    Publication cost and user-privacy results follow immediately from the protocol. It remains to show that $I(W; Q) = (n - t)\log p$. Since $$I(W; Q) = H(W) - H(W|Q) = n\log p - H(W|Q),$$ it suffices to show that $H(W|Q) = t\log p$. Notice that upon disclosing the query $\boldq$, revealing the key $\ell_i$ deterministically reveals the identity of $\boldw\vert_{S_i}$. Because each $\ell_i$ is chosen from $R_p$ of size $p$, it follows that $W|Q = \boldq$ is uniform over a set of size $p^t$, and therefore $H(W|Q = \boldq) = t\log p$. Hence,
    \begin{align*}
        H(W|Q) &= \displaystyle \sum_{\boldq} \Pr(\boldq)H(W|Q = \boldq) \\ &= t\log p \cdot \sum_{\boldq} \Pr(\boldq) \\ &= t\log p. \qedhere
    \end{align*}
\end{proof}

\subsection{Application to the~$\{0, \pm 1\}$ PI problem}

The above protocol for $R_3$ can be applied to the PI problem where $\boldw \in \{0, \pm 1\}^n$ and $\boldx \in \bR^n$. Now we measure $\ell$ in terms of real projections as $\boldw$ and $\boldx$ are real-valued in this case. Recall that~$R_3 = \{1, -\frac{1}{2}\pm \frac{\sqrt{3}}{2}i\}$. Given some $\boldw \in \{0, \pm 1\}^n$, the server defines a vector $\hat{\boldw} \in R_3^n$ as follows:
\begin{align*}
    \hat{w}_i = \begin{cases}
        1 & w_i = 0, \\ -\frac{1}{2}+ \frac{\sqrt{3}}{2}i & w_i = 1, \\-\frac{1}{2}- \frac{\sqrt{3}}{2}i & w_i = -1.
    \end{cases}
\end{align*}
We note that multiplying the imaginary part of $\hat{\boldw}$ by $\frac{2}{\sqrt{3}}$ gives $\boldw$. Then, we proceed with the roots-of-unity protocol with $\hat{\boldw}$. Finally, the server retrieves the desired inner product by considering the imaginary part of the outcome of the roots-of-unity protocol. It works as such:

\begin{enumerate}[label=\roman*)]
    \item The server finds $\hat{\boldw} \in R_3^n$ corresponding to $\boldw$ and publishes $(\hat{w}_{S_{i,1}})^{2}\cdot \hat{\boldw}\vert_{S_i \setminus \{S_{i,1}\}}$ for every $i \in [t]$.
    \item The user sends $$\{(1, (\hat{w}_{S_{i,1}})^2\cdot \hat{\boldw}\vert_{S_i \setminus \{S_{i,1}\}})(\boldx\vert_{S_i})^\intercal \}_{i=1}^{t}$$ to the server (computed over~$\bC$).
    \item The server computes and outputs $$\frac{2}{\sqrt{3}} \cdot \Im(\sum_{i=1}^{t}(\hat{w}_{S_{i,1}})^{2}\cdot(1, (\hat{w}_{S_{i,1}})^{2}\cdot \hat{\boldw}\vert_{S_i \setminus \{S_{i,1}\}})(\boldx\vert_{S_i})^\intercal) = \boldw\boldx^\intercal,$$ where $\Im(\cdot)$ denotes the imaginary part of a complex number.   
\end{enumerate}
The last step works since the imaginary parts of the elements of $R_3$ are $0$ and $\pm \frac{\sqrt{3}}{2}$.

\begin{theorem}
    For any $t \in [n]$, the above $\{0, \pm 1\}$ protocol has publication cost $d = (n - t)\log 3$, server-privacy $I(W; Q) = (n - t)\log 3$, and user-privacy $\ell = 2t$.
\end{theorem}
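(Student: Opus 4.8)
The plan is to reduce this $\{0,\pm1\}$ claim to the roots-of-unity theorem (Theorem~\ref{theorem:rootofunityprotocolmerit}) with $p=3$, so that essentially nothing new needs to be proved about privacy beyond tracking the bookkeeping carefully. First I would observe that the map $\boldw\mapsto\hat{\boldw}$ is a bijection between $\{0,\pm1\}^n$ and $R_3^n$ that sends $\textrm{Unif}(\{0,\pm1\}^n)$ to $\textrm{Unif}(R_3^n)$; since the query $\boldq$ depends on $\boldw$ only through $\hat{\boldw}$ (and the protocol is literally the roots-of-unity protocol run on $\hat{\boldw}$), we have $I(W;Q)=I(\hat W;Q)=(n-t)\log 3$ by Theorem~\ref{theorem:rootofunityprotocolmerit}, and likewise the publication cost is $d=(n-t)\log 3$ bits (the server publishes $t$ segments, each of length $|S_i|-1$, with entries in $R_3$, totalling $(n-t)$ field symbols of $\log 3$ bits each).

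Next I would address user-privacy, which is the one place the count genuinely differs from the complex case. The user sends, for each $i\in[t]$, the single complex number $(1,(\hat w_{S_{i,1}})^2\cdot\hat{\boldw}\vert_{S_i\setminus\{S_{i,1}\}})(\boldx\vert_{S_i})^\intercal$; writing the coefficient vector as $\boldc_i+\mi\boldd_i$ with $\boldc_i,\boldd_i\in\bR^{|S_i|}$, a complex evaluation on a real $\boldx\vert_{S_i}$ reveals exactly the two real projections $\boldc_i(\boldx\vert_{S_i})^\intercal$ and $\boldd_i(\boldx\vert_{S_i})^\intercal$. Hence each block contributes (at most) two real linear projections, for a total of $\ell=2t$, matching the claim; I would note that since the $S_i$ partition $[n]$ these projections act on disjoint coordinate blocks, so under the worst-case independence convention used throughout the paper they are indeed linearly independent and $\ell=2t$ is exact.

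Finally I would verify correctness, i.e.\ that the server's output equals $\boldw\boldx^\intercal$. By the correctness of the roots-of-unity protocol (the displayed identity in Section~\ref{section:rootsofunityprotocol}), the inner sum before taking imaginary part equals $\hat{\boldw}\boldx^\intercal=\sum_{j}\hat w_j x_j$. Since each $\hat w_j\in R_3$ has imaginary part in $\{0,\pm\frac{\sqrt3}{2}\}$ with $\Im(\hat w_j)=\frac{\sqrt3}{2}w_j$ by construction, and $\boldx$ is real, $\Im(\hat{\boldw}\boldx^\intercal)=\frac{\sqrt3}{2}\sum_j w_j x_j=\frac{\sqrt3}{2}\boldw\boldx^\intercal$, so multiplying by $\frac{2}{\sqrt3}$ recovers $\boldw\boldx^\intercal$ exactly.

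I do not expect a serious obstacle here: the theorem is essentially a corollary of Theorem~\ref{theorem:rootofunityprotocolmerit} together with the elementary real/imaginary decomposition argument. The only mild subtlety to handle with care is the jump from $\ell=t$ (complex) to $\ell=2t$ (real) — one must be explicit that a complex linear functional evaluated on real inputs splits into two real functionals and that this split is tight, rather than glossing over it. A secondary point worth a sentence is confirming that $H(W\mid Q=\boldq)$ is still computed over the $R_3^n$ picture, which is immediate because $\boldw\leftrightarrow\hat{\boldw}$ is a bijection and the protocol never uses anything about $\boldw$ that is not captured by $\hat{\boldw}$.
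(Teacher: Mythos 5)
Your proposal is correct and follows essentially the same route as the paper: the paper's own proof simply defers to Theorem~\ref{theorem:rootofunityprotocolmerit} and notes that $\ell=2t$ because each complex inner product splits into real and imaginary parts; your write-up just makes the bijection $\boldw\leftrightarrow\hat{\boldw}$ and the real/imaginary decomposition explicit. The extra care you take in justifying that each complex evaluation on real data yields exactly two real projections (under the paper's worst-case independence convention) is a welcome elaboration, not a deviation.
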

The proof works similarly to that of Theorem \ref{theorem:rootofunityprotocolmerit}, with the exception that $\ell = 2t$ because the user needs to send the real and imaginary parts of each inner product. This means that the above protocol is not optimal in terms of the privacy tradeoff; the publication cost is optimal nonetheless.
\section{Omitted proofs}\label{section:omitted}

\begin{proof}[Proof of Lemma~\ref{lemma:hadamardclaims}.(ii)]
    We prove this by induction. Consider the base case of $m = 1$. The rightmost column of $H_2$ is the submatrix of size $2^m \times m$ whose rows are precisely all vectors in $\{\pm 1\}^m$. Now suppose the claim holds for $m = k \ge 1$, and we would like to show that it is also true for $m = k + 1$. It can be easily seen that in $H_{2^k}$, the rightmost $k$ columns constitute the submatrix containing all vectors in $\{\pm 1\}^k$ as rows. If we take these columns in $H_{2^{k+1}}$ and also take the column with $1$'s in the first half and $-1$'s in the second half, then these $k+1$ columns constitute the submatrix that contains all vectors in $\{\pm 1\}^{k+1}$ as rows.
\end{proof}
\begin{proof}[Proof of Lemma~\ref{lemma:hadamardclaims}.(iii)]
    We prove this by induction. Recall Sylvester's construction for a Hadamard matrix of order $2^m$, where $m$ is a positive integer ($H_1 = [1]$):
    \begin{align*}
        H_{2^m} &= \begin{bmatrix}
        H_{2^{m-1}} & H_{2^{m-1}} \\
        H_{2^{m-1}} & -H_{2^{m-1}}
    \end{bmatrix}.
    \end{align*}
    Consider the base case of $m = 1$, where the matrix under consideration is 
    \begin{align*}
        H_{2} &= \begin{bmatrix}
        1 & 1 \\
        1 & -1
    \end{bmatrix}.
    \end{align*} It can be readily verified that the element-wise product of any two rows of $H_2$ is a row in $H_2$. Now suppose the statement holds for $m = k \ge 1$. For $H_{2^{k+1}}$, we have the following three cases.
    \begin{enumerate}
        \item The element-wise product of any two rows from the top half of~$H_{2^{k+1}}$ is a row in the top half. This follows from the induction assumption that the element-wise product of any two rows of~$H_{2^{k}}$ is a row of~$H_{2^{k}}$.
        \item Similarly, the element-wise product of any two rows from the bottom half of~$H_{2^{k+1}}$ is also a row in the top half. 
        \item From the induction assumption, it also follows that the element-wise product of a row from the top half and a row from the bottom half of~$H_{2^{k+1}}$ is a row in the bottom half. 
    \end{enumerate}
    We have thus argued that~$H_{2^{k+1}}$ satisfies the required property. \qedhere
\end{proof} 

\begin{proof}[Proof of Lemma \ref{lemma:pmvectorsinsubspace}]
    Let $M \in \bR^{\ell \times n}$ be a matrix whose row~$\bR$-span is $S$. Since $M$ has a reduced row-echelon form, $S$ can be written as $$S = \{
    (\boldv, L(\boldv))\vert \boldv \in \bR^\ell\}$$ for some linear transform $L: \bR^\ell \to \bR^{n-\ell}$, up to a permutation of entries. It follows that
    \begin{align*}
        |S \cap A^n| = \{
    (\boldv, L(\boldv))\vert \boldv \in A^\ell, L(\boldv) \in A^{n-\ell}\},
    \end{align*}
    which indicates that 
    \begin{align*}
        |S \cap A^n| &\le |\{
    (\boldv, L(\boldv))\vert \boldv \in A^\ell\}| = |A|^\ell.\qedhere
    \end{align*}
\end{proof}
\end{document}